\DeclareMathOperator{\FB}{FB}
\DeclareMathOperator{\LT}{LT}
\DeclareMathOperator{\Hyp}{Hyp}
\title{Distributed matrix multiplication with straggler tolerance over very small fields}
\author{
  Fidalgo-Díaz, Adrián\\
  \texttt{adrian.fidalgo22@uva.es}
  \and
  Martínez-Peñas, Umberto\\
  \texttt{umberto.martinez@uva.es}
}
\date{}
\begin{document}

\maketitle

\begin{abstract}
    The problem of distributed matrix multiplication with straggler tolerance over finite fields is considered, focusing on field sizes for which previous solutions were not applicable (for instance, the field of two elements). We employ Reed-Muller-type codes for explicitly constructing the desired algorithms and study their parameters by translating the problem into a combinatorial problem involving sums of discrete convex sets. We generalize polynomial codes and matdot codes, discussing the impossibility of the latter being applicable for very small field sizes, while providing optimal solutions for some regimes of parameters in both cases.
\end{abstract}

\textbf{Keywords:} Distributed Matrix Multiplication; Footprint bound; Reed-Muller codes; Hyperbolic codes; Minkowski sum.

\section{Introduction}

\subsection{Statement of the problem}
Heavy computations require significant time to execute. One option for improving execution times is to develop better computers with more advanced CPUs. Unfortunately, this solution becomes increasingly challenging each year, as we seem to be approaching a fundamental limit in clock speed. Traditionally, the alternative has been to improve not the infrastructure but the algorithms themselves, for instance, by incorporating parallelization. A parallelizable algorithm divides a task into smaller sub-tasks that can be computed simultaneously. When implementing these algorithms, a typical approach is to distribute the smaller tasks across multiple computers (worker nodes) that operate independently. Once these computations are completed, a master node collects the results and reconstructs the original computation.

Roughly speaking, the greater the number of worker nodes, the greater the speedup. Equivalently, having more worker nodes translates into smaller tasks for each one to compute. However, when implementing these algorithms, real-world problems start to arise. If the number of worker nodes is very high, the expected difference between their execution times becomes significant, say because of the network traffic or due to other reasons. This will induce a bottleneck which limits the performance times of the algorithm since the master node must wait for all the worker nodes to complete their tasks in order to obtain the original computation. This effect is often known as the ``straggler effect'' in the literature. The objetive then is to design algorithms for which the master node can recover the original computation from a subset of worker nodes, considering straggler nodes as non-responsive. Summarizing, we need to recover missing information from the received data, a perfect fit for coding theory methods.

\begin{figure}[h]
    \centering
    \includegraphics[width=8cm]{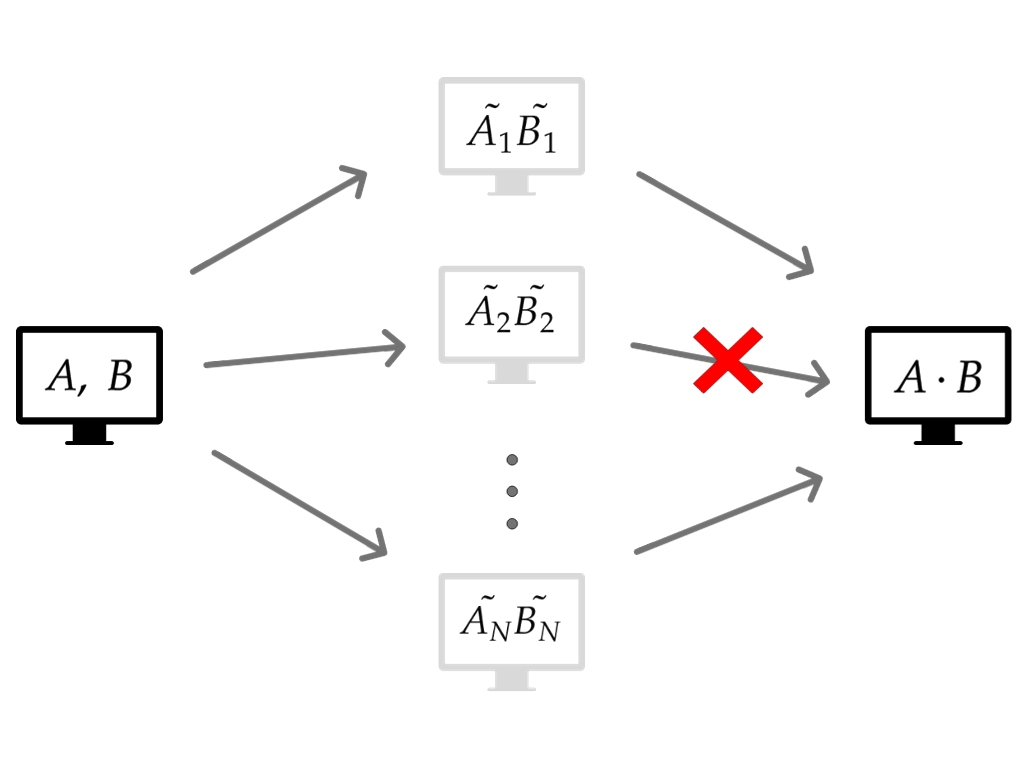}
    \caption{General scheme of DMM with straggler tolerance, where each $\tilde{A}_i$ and $\tilde{B}_i$ denotes a matrix of lower size than $A$ and $B$.}
\end{figure}

In this manuscript we focus on a computation problem that arises in a large variety of problems: multiplying two matrices. More precisely, we consider the scenario where matrices are defined over a finite field. Among the operations which can be reduced to matrix multiplication over finite fields, we briefly mention attacking code-based cryptography \cite{aragon2022bike, pellikaan2017codes, bernstein2008attacking} (or more generally, application to decoding error-correcting codes \cite{pellikaan2017codes}) and solving polynomial equations over finite fields via resultants \cite[Chapter 3]{cox1997ideals}. The first of these applications is particularly notable when $q=2$, since McEliece cryptosystem using binary Goppa codes remains one the most promising cryptosystems of this family.

\subsection{State of the art and our solution}
The most celebrated solution for distributed matrix multiplication with straggler tolerance (DMM, henceforth) was introduced in \cite{polynomial_codes}, where the authors proposed encoding both matrices as evaluations of polynomials before multiplying them, or equivalently, encoding them using Reed-Solomon codes. This approach was later generalized in \cite{polydot_codes}, which also employed Reed-Solomon codes but encoded the computations differently. This solution reduces the number of responsive nodes required to recover the original computation, at the cost of higher per-worker computation and communication costs compared to \cite{polynomial_codes}. So depending on the network specific properties, one of the them will preferred in different scenarios.

Both methods have as downside requiring the number of worker nodes $N$ be smaller than the size of the field over which the matrices are defined. When considering matrices over a finite field of size $q$, the constraint $N \leq q$ is translated into using at most $q$ worker nodes, which is very restrictive especially for small sizes of $q$ like $2$ or $3$. Enlarging the field from $\mathbb{F}_q$ to $\mathbb{F}_{q^\prime}$ for some $q^\prime \geq N$ is not a good solution, as this will result in computational overhead. Even worse, if some worker nodes receive a task that can be computed in $\mathbb{F}_q$, they will complete their tasks before those computing in $\mathbb{F}_{q^\prime}$, therefore aggravating the straggler effect. To address this issue, some alternatives deriving from algebraic geometry codes were proposed \cite{fidalgo2024distributed, machado2023hera, makkonen2023algebraic, matthews2024algebraic, li2024algebraic}, which keep fixed the size of the field while allowing $N \geq q$. For the readers not familiarized with algebraic geometry codes, they work by fixing an algebraic curve over a finite field and evaluating functions on the rational points of the curve in a similar way standard polynomials are evaluated on the elements of the field. For some values of $q$, remarkably $2$ and other non-perfect squares, there exist few options of algebraic curves with many rational points where to evaluate. Consequently, algebraic geometry codes are not an effective solution for these field sizes.

In this work, we propose using a different family of evaluation codes for DMM: codes from multivariate polynomials. More concretely, we propose similar ideas to Reed-Muller codes and hyperbolic codes in order to obtain algorithms for DMM allowing $N \geq q$. Notably, this new approach allows DMM with straggler tolerance over $\mathbb{F}_2$, the field of $2$ elements, something that remained impossible with the previous methods of the state of the art.

\section{Preliminaries. The Footprint Bound} \label{section:preliminaries}

We denote by $ \mathbb{F}_q $ the finite field of size $q$ and define the $\mathbb{F}_q$-algebra
$$ \mathcal{R} = \frac{\mathbb{F}_q[x_1, x_2, \ldots, x_\ell]}{(x_1^q-x_1, \ldots, x_\ell^q-x_\ell)}, $$
where $ (\mathcal{A}) $ denotes the ideal generated by a set $ \mathcal{A} $. Observe that $\mathcal{R}$ is naturally isomorphic to the ring of functions from $\mathbb{F}_q^l$ to $\mathbb{F}_q$ (see Remark \ref{remark:ring_of_functions}). Let $\mathcal{V} \subseteq \mathcal{R} $ be an ($ \mathbb{F}_q $-linear) vector subspace. For a set $\mathbb{P} \subseteq \mathbb{F}_q^\ell $, consider the evaluation map $ev: \mathcal{V} \to \mathbb{F}_q^{\mathbb{P}}$ which sends $f \in \mathcal{V} $ to $\left ( f(P) \right )_{P \in \mathbb{P}}$, its evaluations in $\mathbb{P}$. Observe that $ev$ is well defined and a linear map.

\begin{notation}
    We write vectors in bold.
\end{notation}

For $f \in \mathcal{R} \setminus \{0\}$, we define its restricted \textbf{footprint} as $ \Delta_{q,\prec}(f) := \{ \textbf{a} \in \mathbb{N}^\ell_{<q}\such x^\textbf{a} \notin ( \LT (f) ) \}$, where $ \mathbb{N}_{<q} = \{ 0,1, \ldots, q-1 \} $, $ \prec $ is a monomial ordering (defined in $ \mathbb{N}^\ell $), $ x^\textbf{a} = x_1^{a_1} \cdots x_\ell^{a_\ell} $, for $\textbf{a} \in \mathbb{N}^\ell_{<q} $, and $ \LT(f) $ denotes the leading monomial of $ f $ with respect to $ \prec $. The following lemma can be found in \cite[Sec. IV]{geil2000footprints}.

\begin{lemma}[\cite{geil2000footprints}] \label{lemma:footprint}
    Let $f \in \mathcal{R} \setminus \{0\}$ and $\delta(f) := |\Delta_{q,\prec}(f)| $. If $ Z(f) $  denotes the set of zeros  of $ f $ in $\mathbb{F}_q^l$, then $ |Z(f)| \leq \delta(f) $. In particular, if we define $ \delta(\mathcal{V}) := \max \{\delta(f) \such f \in  \mathcal{V} \setminus \{0\}  \}$, then $|Z(f)| \leq \delta(\mathcal{V}) $, for all $ f \in  \mathcal{V}\setminus \{0\} $.
\end{lemma}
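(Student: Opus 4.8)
The plan is to pass from the quotient ring $\mathcal{R}$ to the polynomial ring $S := \mathbb{F}_q[x_1,\dots,x_\ell]$ and to compare two descriptions of the dimension of one quotient algebra of $S$. Write $V := Z(f) \subseteq \mathbb{F}_q^\ell$, let $\widetilde f \in S$ be the canonical representative of $f$ having all exponents in $\mathbb{N}_{<q}$ (so that $\LT(f) = \LT(\widetilde f)$, which is how $\LT$ is read off for elements of $\mathcal{R}$), and let $I := I(V) \subseteq S$ be the vanishing ideal of $V$. Since $\widetilde f$ induces the zero function on $V$ and each $x_i^q - x_i$ vanishes on all of $\mathbb{F}_q^\ell$, both $\widetilde f$ and $x_1^q - x_1, \dots, x_\ell^q - x_\ell$ lie in $I$.

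Next I would recall two ways of counting $\dim_{\mathbb{F}_q}(S/I)$. On the one hand, the evaluation map $S/I \to \mathbb{F}_q^{V}$ is an isomorphism of $\mathbb{F}_q$-vector spaces: it is injective by the very definition of $I$, and surjective because for each $P \in V$ the Lagrange-type polynomial $\prod_{i=1}^{\ell} \prod_{c \in \mathbb{F}_q \setminus \{P_i\}} (x_i - c)(P_i - c)^{-1}$ is the indicator of $P$ on $\mathbb{F}_q^\ell$, hence of $P$ on $V$; thus $\dim_{\mathbb{F}_q}(S/I) = |V|$. On the other hand, by the standard Gröbner-basis fact, the set of standard monomials $\operatorname{Std}_\prec(I) := \{ x^{\mathbf{a}} : x^{\mathbf{a}} \notin (\LT(I)) \}$, where $(\LT(I))$ denotes the ideal generated by the leading monomials of all nonzero elements of $I$, is an $\mathbb{F}_q$-basis of $S/I$; hence $|\operatorname{Std}_\prec(I)| = \dim_{\mathbb{F}_q}(S/I) = |V|$.

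The remaining step is a comparison of monomial ideals. From $\widetilde f \in I$ we get $\LT(f) \in (\LT(I))$, and from $x_i^q - x_i \in I$ together with $\LT(x_i^q - x_i) = x_i^q$ (every monomial order satisfies $1 \preceq x_i^{q-1}$ and is multiplicative, so $x_i \prec x_i^q$ because $q \geq 2$) we get $x_i^q \in (\LT(I))$ for every $i$. Therefore $(\LT(f)) + (x_1^q,\dots,x_\ell^q) \subseteq (\LT(I))$, and passing to complements in the monoid of monomials reverses this inclusion: $\operatorname{Std}_\prec(I) \subseteq \{ x^{\mathbf{a}} : x^{\mathbf{a}} \notin (\LT(f)),\ a_i < q \text{ for all } i \}$, and this last set is exactly $\Delta_{q,\prec}(f)$ (a monomial avoids $(x_1^q,\dots,x_\ell^q)$ iff all its exponents are below $q$). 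Combining this with the dimension counts yields $|Z(f)| = |V| = |\operatorname{Std}_\prec(I)| \leq |\Delta_{q,\prec}(f)| = \delta(f)$. The ``in particular'' statement is then immediate: for every $f \in \mathcal{V} \setminus \{0\}$ we have $|Z(f)| \leq \delta(f) \leq \max\{ \delta(g) : g \in \mathcal{V}\setminus\{0\} \} = \delta(\mathcal{V})$.

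I do not expect a genuine obstacle here, as this is a classical footprint argument; the only points requiring care are bookkeeping ones: fixing the exponents-below-$q$ representative so that $\LT(f)$ is unambiguous for $f \in \mathcal{R}$, and — if a self-contained treatment is wanted rather than a citation — recalling why the standard monomials both span $S/I$ (via the multivariate division algorithm) and are linearly independent modulo $I$ (a nontrivial $\mathbb{F}_q$-relation among them lying in $I$ would have its leading monomial in $(\LT(I))$, a contradiction). Both facts are standard and may be invoked as in \cite{geil2000footprints}.
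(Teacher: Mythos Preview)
Your argument is correct and is precisely the classical footprint argument: lift to $S=\mathbb{F}_q[x_1,\dots,x_\ell]$, identify $\dim_{\mathbb{F}_q}(S/I(V))$ with $|V|$ via evaluation and with the number of standard monomials via Gr\"obner bases, then use $(\LT(\widetilde f))+(x_1^q,\dots,x_\ell^q)\subseteq(\LT(I(V)))$ to bound the standard monomials by $\Delta_{q,\prec}(f)$. The paper itself does not prove the lemma but simply cites \cite{geil2000footprints}; your write-up is exactly the proof one finds there (and the bookkeeping points you flag---choosing the reduced representative so that $\LT(f)$ is well defined in $\mathcal{R}$, and the Macaulay basis theorem for $S/I$---are the only places needing care), so there is nothing to compare beyond noting that you have supplied the details the paper omits.
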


As a consequence, denoting $  k := \delta(\mathcal{V})  $, if $f \in \mathcal{V}$ and $f(P_i) = 0$ for $k + 1$ distinct points $P_1, P_2, \ldots, P_{k+1} \in \mathbb{P}$, then $f = 0$. Equivalently, let $\mathcal{B} := \{f_1, f_2, \ldots, f_{\kappa}\}$ be a basis of $\mathcal{V}$ and $G$ be the matrix associated to $ev$ in such a basis, that is,
\begin{equation}
    \begin{split}
        G :=
        \begin{pmatrix}
            f_1(P_1)            & f_1(P_2)            & \ldots & f_1(P_{k+1}) \\
            f_2(P_1)            & f_2(P_2)            & \ldots & f_2(P_{k+1}) \\
            \vdots              & \vdots              & \ddots & \vdots       \\
            f_{\kappa}(P_{1}) & f_{\kappa}(P_{2}) & \ldots & f_{\kappa}(P_{k+1})
        \end{pmatrix}.
    \end{split}
    \label{equation:matrix_interp}
\end{equation}
Then $G$ has a right inverse, i.e., $ev$ is injective.

\begin{remark}
    In general, the bound on the number of common zeros of $\mathcal{V}$ given in Lemma \ref{lemma:footprint} is not an equality. Nevertheless, if $\mathcal{V}$ is generated by a set of monomials closed under divisibility, then the equality holds (see \cite[Th. 2.8]{camps2020monomial}). Since this is the case for the majority of vector spaces $\mathcal{V}$ that we consider in this manuscript, the bound of Lemma \ref{lemma:footprint} serves as a good proxy for the number of points needed for interpolation.
\end{remark}

\section{Multivariate polynomial codes} \label{section:polynomial}

We present a method for DMM using multivariate polynomials which generalizes polynomial codes \cite{polynomial_codes} and that allows using more than $q$ worker nodes. In fact, the method allows an arbitrary large number of worker nodes for a fixed field size. We give explicit constructions as well as bounds on the recovery threshold (see subsections \ref{subsection:few_variables} and \ref{subsection:many_variables}). Let us discuss the framework.

Denote by $\mathbb{F}_q^{r \times s}$ the $\mathbb{F}_q$-vector space of matrices with $r$ rows and $s$ columns with entries in $\mathbb{F}_q$. Let $A \in \mathbb{F}_q^{r \times s}$ and $B \in \mathbb{F}_q^{s \times t}$ be two matrices. In order to multiply them, we split them as block matrices in the following way:
\begin{equation} \label{equation:matrices_polynomial}
    \begin{split}
        A := \begin{pmatrix}
            A_1 \\
            A_2 \\
            \vdots \\
            A_m \\
        \end{pmatrix}, \quad
        B := \begin{pmatrix}
            B_1 & B_2 & \ldots & B_n
        \end{pmatrix}\\ \implies
        A B =
        \begin{pmatrix}
            A_1 B_1 & A_1 B_2 & \ldots & A_1 B_n\\
            A_2 B_1 & A_2 B_2 & \ldots & A_2 B_n\\
            \vdots & \vdots & \ddots & \vdots\\
            A_m B_1 & A_m B_2 & \ldots & A_m B_n
        \end{pmatrix},
    \end{split}
\end{equation}
where $A_i \in \mathbb{F}_q^{\frac{r}{m} \times s}$ and $B_i \in \mathbb{F}_q^{s \times \frac{t}{n}}$. We want to find $p_A \in \mathcal{R}^{\frac{r}{m} \times s}$ and $p_B \in \mathcal{R}^{s \times \frac{t}{n}}$ such that
\begin{equation*}
    p_A := \sum_{i=1}^m A_i x^{\textbf{a}_i}, \quad p_B := \sum_{j=1}^n B_j x^{\textbf{b}_j},
\end{equation*}
and such that, for every $\textbf{a}, \textbf{a}^\prime \in D_A := \{ \textbf{a}_1, \ldots, \textbf{a}_m \}$ and $\textbf{b}, \textbf{b}^\prime \in D_B := \{ \textbf{b}_1, \ldots, \textbf{b}_n \}$,
\begin{equation} \label{equation:non_colliding}
    (\textbf{a} + \textbf{b})_q = (\textbf{a}^\prime + \textbf{b}^\prime)_q \implies (\textbf{a}, \textbf{b}) = (\textbf{a}^\prime, \textbf{b}^\prime),
\end{equation}
where, given $a \in \mathbb{N}_{<2q}$ we define $(a)_q$ as
\begin{equation*}
    (a)_q :=
    \begin{cases}
        a & \text{if } a < q,\\
        (a \mod q) + 1  & \text{if } q \leq a < 2q,
    \end{cases}
\end{equation*}
and we extend it coordinatewise to $\textbf{a} \in \mathbb{N}^l$.
\begin{remark} \label{remark:ring_of_functions}
    Observe that $\mathcal{R}$ is, by definition, the ring of polynomials quotient the following equivalence relation: $f \sim g$ if and only if they define the same function when evaluated in $\mathbb{F}_q^l$. Moreover, the operation $(\cdot)_q$ is the reduction of the exponent of a monomial to its cannonical representation, i.e., $x^\textbf{a} x^\textbf{b} = x^{\textbf{a}+\textbf{b}} = x^{(\textbf{a}+\textbf{b})_q} \in \mathcal{R}$ for all $\textbf{a},\textbf{b} \in \mathbb{N}_{<q}^\ell $.
\end{remark}
Next, set $h := p_A p_B \in \mathcal{R}^{\frac{r}{m} \times \frac{t}{n}}$.
\begin{remark} \label{remark:non_colliding}
    By enforcing (\ref{equation:non_colliding}) we ensure that $A_i B_j$ is the coefficient of $x^{\textbf{a}_i + \textbf{b}_j}$ in $h$, so we can retrieve $AB$ from $h$.
\end{remark}
Observe that if $h_{i,j} \in \mathcal{R} $ is the $(i,j)$th coordinate of $h$, then $h_{i,j} \in \mathcal{V} := \langle x^{\textbf{a} + \textbf{b}} \such \textbf{a} \in D_A, \,\textbf{b} \in D_B \rangle_{\mathbb{F}_q} $, where $ \langle \cdot \rangle_{\mathbb{F}_q} $ denotes $ \mathbb{F}_q $-linear span. So from Section \ref{section:preliminaries}, we can interpolate $h$ from knowing any $k+1$ evaluations (the image of $h$ by $ev$ defined with $|\mathbb{P}| = k+1$), where $ k = \delta(\mathcal{V}) $, by inverting the matrix $ G $ from (\ref{equation:matrix_interp}) on the right. This interpolation step has negligible computational complexity compared to that of the matrix multiplication by each worker when the matrix sizes $ r $, $ s $ and $ t $ are large, as in \cite[Sec. 5]{fidalgo2024distributed} (see Appendix \ref{appendix:complexity}).

This results in the following \textbf{algorithm}: the master node shares the evaluations $p_A(P_i) \in \mathbb{F}_q^{\frac{r}{m} \times s}$ and $p_B(P_i) \in \mathbb{F}_q^{s \times \frac{t}{n}}$, with the i-th worker. Then, the workers compute the matrix multiplications $h(P_i) = p_A (P_i) p_B(P_i)$ in parallel. When any $k+1$ of them end their computations, the master node recovers $h$ by performing componentwise interpolations, and so recovering $AB$ because of Remark \ref{remark:non_colliding}.  In the literature, the number $k + 1$ is called the \textbf{recovery threshold}, i.e., the minimum number of responsive nodes needed to recover the product $AB$.

The remaining part is how to select sets $ D_A, D_B \subseteq \mathbb{N}_{<q}^\ell$ of sizes $|D_A| = m$ and $|D_B| = n$ satisfying (\ref{equation:non_colliding}) in such way that $k$ is as small as possible. Noticing that $\mathcal{V}$ is generated by monomials, we conclude that
\begin{equation} \label{equation:optimal_k}
    \begin{split}
        k = \delta(\mathcal{V}) = \max \left \{ |\Delta(x^\textbf{a})| \such x^\textbf{a} \in \mathcal{V} \right \} = q^\ell - \min \left \{\prod_{i=1}^{\ell} (q - (a_i)_q) \such x^{\textbf{a}} \in \mathcal V \right \}.
    \end{split}
\end{equation}
We summarize the problem of obtaining the coefficients of $p_A$ and $p_B$ satisfying (\ref{equation:non_colliding}) and minimizing (\ref{equation:optimal_k}) in Problem \ref{problem:polynomial}.

\begin{notation} \label{notation:minkowski}
    We define $D_A +_q D_B$ as the Minkowski sum reduced with $(\cdot)_q$, that is
    \begin{equation*}
        D_A +_q D_B := \{(\textbf{a} + \textbf{b})_q \in \mathbb{N}^l_{<q} \such \textbf{a} \in D_A, \textbf{b} \in D_B\}.
    \end{equation*}
\end{notation}

\begin{problem} \label{problem:polynomial}
    Find $D_A, D_B \subseteq \mathbb{N}^\ell_{<q}$ such that
    \begin{enumerate}
        \item $|D_A| = m$ and $|D_B| = n$,
        \item $|D_A +_q D_B| = |D_A| \cdot |D_B|$ or, equivalently, satisfying (\ref{equation:non_colliding}), and
        \item $\FB(D_A +_q D_B) := \min \{\prod_{i=1}^\ell (q - (a_i + b_i)_q) \such \textbf{a} \in D_A, \textbf{b} \in D_B\}$ is as large as possible.
    \end{enumerate}
    If $D_A$ and $D_B$ achieve items 1 and 2, we say they are a solution. If item 3 is also achieved, we say they are an optimal solution.
\end{problem}

More explicitly, finding a solution for Problem \ref{problem:polynomial} will yield an algorithm for DMM with straggler tolerance for matrices which are subdivided in $m$ and $n$ submatrices as in (\ref{equation:matrices_polynomial}), with recovery threshold $q^l - \FB(D_A +_q D_B) + 1$, as noted in (\ref{equation:optimal_k}). So minimizing the recovery threshold is equivalent to maximizing $\FB(D_A +_q D_B)$. We study such solutions specifying both $\FB(D_A +_q D_B)$ and the recovery threshold for clarity, but the reader should keep in mind that one uniquely determines the other. Even though in Section \ref{section:preliminaries} we defined the footprint such that $\delta(f) = q^l - \FB(\LT(f))$, often we will refer to $\FB(D_A + D_B)$ simply as the footprint during the rest of the manuscript.

\begin{remark}
    Notice that the maximum number of worker nodes is $q^l$, the maximum number of different evaluations we can perform over $\mathbb{F}_q^l$. In the case $ \ell = 1$ (polynomials defined over one variable), this algorithm is known as polynomial codes \cite{polynomial_codes} in the literature, and Problem \ref{problem:polynomial} is solved by choosing $D_A := \{0, 1, \ldots, m-1\}$ and $D_B := \{0, m, \ldots, (n-1)m\}$, always under the assumption that $mn < q$. Polynomial codes attain the best information theoretical recovery threshold but the number of worker nodes is limited by the size of the field the matrices are defined over. In particular, small fields such as $ \mathbb{F}_2 $ or $ \mathbb{F}_3 $ are not allowed by polynomial codes \cite{polynomial_codes}. Neither constructions arising from algebraic geometry codes \cite{fidalgo2024distributed, matthews2024algebraic, makkonen2023algebraic, li2024algebraic} are applicable for these field sizes because of the lack of algebraic function fields over them.
\end{remark}

\begin{remark}
    One of the first things we notice when looking for optimal solutions to Problem \ref{problem:polynomial} is that if $D_A$ (analogously $D_B$) is not ``laying on the axes'', then the solution $D_A$ and $D_B$ is not optimal. More formally, if $d_i := \min \{a_i \such \textbf{a} \in D_A\} > 0$, we can consider $D_A^\prime := D_A - (0,\ldots,d_i,\ldots,0)$ together with $D_B$, obtaining a strictly better solution in terms of the footprint bound.
\end{remark}

\subsection{A bound on the recovery threshold}

We start by introducing a bound on $\FB(D_A +_q D_B)$ for a solution to Problem \ref{problem:polynomial} and, consequently, a bound on the recovery threshold the algorithm produces. The next definition is \cite[Def. 4]{garcia2020high}.

\begin{definition}[\cite{garcia2020high}] \label{definition:hyperbolic}
    Let $q, F, l \in \mathbb{N}$, we define the hyperbolic set $\Hyp_q(F,\ell)$ or simply $\Hyp(F)$ as
    \begin{equation*}
        \Hyp_q(F, \ell) := \left \{\textbf{a} \in \mathbb{N}_{<q}^\ell \such \prod_{i=1}^l (q - a_i) \geq F \right \}.
    \end{equation*}
\end{definition}

\begin{figure}
    \centering
    \resizebox{!}{200pt}{
        \begin{tikzpicture}
	\draw[thick,->] (0,0) -- (10.6, 0) node {};
	\draw[thick,->] (0,0) -- (0, 10.6) node {};
	\draw (0 cm, 0pt) node[anchor=north east] {$0$};
	\foreach \x in {5,10,...,10}
	    \draw (\x cm, -1pt) node[anchor=north] {$\x$};
	\foreach \y in {5,10,...,10}
	    \draw (-1pt, \y cm) node[anchor=east] {$\y$};

	\foreach \x in {0,...,10}
	    \foreach \y in {0,...,10}
	        \fill[color=gray] (\x, \y) circle[radius=1pt] node {};

	\fill[fill= blue, opacity=0.5](0, 10) -- (0, 0) -- (10, 0) -- (10, 3) -- (9, 7) -- (7, 9) -- (3, 10) -- cycle; 
	\fill[fill= red, opacity=0.5](0, 0) -- (8, 0) -- (8, 3) -- (7, 5) -- (5, 7) -- (3, 8) -- (0, 8) -- cycle; 
	\fill[fill= yellow, opacity=0.5](0, 0) -- (6, 0) -- (5, 2) -- (2, 5) -- (0, 6) -- cycle; 
\end{tikzpicture}
    }
    \caption{Hyperbolic sets $\Hyp_{11}(53,2) \subseteq \Hyp_{11}(24,2) \subseteq \Hyp_{11}(8,2)$ represented in yellow, red and blue, respectively.}
\end{figure}
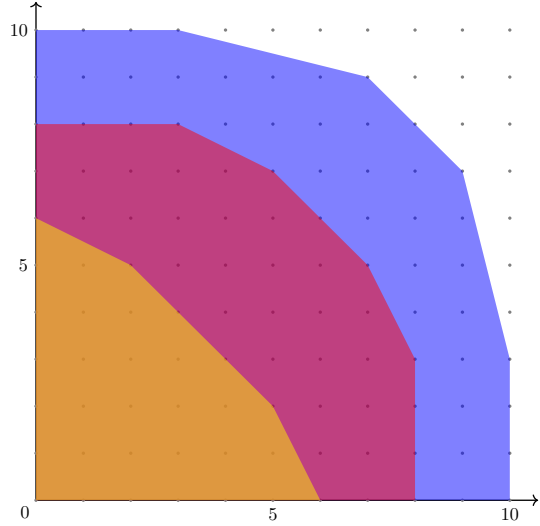

The hyperbolic set $\Hyp(F)$ is, by definition, the biggest subset of $\mathbb{N}^l_{<q}$ such that $\FB(\Hyp(F)) \geq F$. Proposition \ref{proposition:polynomial_bound} uses this concept to give a bound on $\FB(D_A +_q D_B)$.

\begin{proposition} \label{proposition:polynomial_bound}
    Consider $\xi := \max \{F \in \mathbb{N} \such |\Hyp(F)| \geq mn\}$. If $D_A$ and $D_B$ is a solution, then $\FB(D_A +_q D_B) \leq \xi$.
\end{proposition}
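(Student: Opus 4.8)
The plan is to exploit the defining maximality property of the hyperbolic set together with the non-colliding condition (\ref{equation:non_colliding}). First I would observe that the Minkowski sum $D_A +_q D_B$ is a subset of $\mathbb{N}^\ell_{<q}$ of size exactly $mn$, since items 1 and 2 of Problem \ref{problem:polynomial} guarantee $|D_A +_q D_B| = |D_A|\cdot|D_B| = mn$. Call this set $S$. By definition, $\FB(S) = \min\{\prod_{i=1}^\ell (q-(a_i+b_i)_q) \such \textbf{a}\in D_A,\textbf{b}\in D_B\} = \min\{\prod_{i=1}^\ell (q - c_i) \such \textbf{c}\in S\}$, using that the coordinates of elements of $S$ already lie in $\mathbb{N}_{<q}$ (so $(\cdot)_q$ is the identity on them).

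Next I would argue by contradiction: suppose $\FB(S) = F > \xi$, i.e. $F \geq \xi+1$. Then every $\textbf{c}\in S$ satisfies $\prod_{i=1}^\ell(q-c_i) \geq F$, so by Definition \ref{definition:hyperbolic} we have $S \subseteq \Hyp(F)$. Taking cardinalities, $|\Hyp(F)| \geq |S| = mn$. But this contradicts the maximality in the definition of $\xi := \max\{F \in \mathbb{N} \such |\Hyp(F)| \geq mn\}$, because $F > \xi$ would then also satisfy $|\Hyp(F)| \geq mn$. Hence $\FB(S) \leq \xi$, which is exactly $\FB(D_A +_q D_B) \leq \xi$.

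The only subtlety worth spelling out — and the closest thing to an obstacle — is making sure $\xi$ is well defined and that the monotonicity being invoked is the right one. One should note that $|\Hyp(F)|$ is nonincreasing in $F$ (larger $F$ imposes a stronger constraint, so $\Hyp(F') \subseteq \Hyp(F)$ whenever $F' \geq F$), and that $|\Hyp(1)| = q^\ell \geq mn$ whenever the problem is feasible (the maximum number of worker nodes being $q^\ell$), so the set $\{F \such |\Hyp(F)| \geq mn\}$ is nonempty; it is also bounded above since $|\Hyp(F)| = 0$ once $F > q^\ell$. Thus the maximum defining $\xi$ exists. With that in place the argument is essentially a one-line containment plus the maximality of $\xi$, so I do not anticipate any real difficulty beyond stating these well-definedness remarks cleanly.
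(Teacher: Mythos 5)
Your proof is correct and follows essentially the same route as the paper's: both rest on the containment $D_A +_q D_B \subseteq \Hyp(\FB(D_A +_q D_B))$, the size count $|D_A +_q D_B| = mn$, and the definition of $\xi$ (the paper states this directly rather than by contradiction, but the content is identical). Your added remarks on the monotonicity of $|\Hyp(F)|$ and the well-definedness of $\xi$ are a reasonable, if optional, supplement.
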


\begin{proof}
    Due to the definition of $\FB(D_A +_q D_B)$ we have that $\prod_{i=1}^\ell (q - c_i) \geq \FB(D_A +_q D_B)$ for every $\textbf{c} \in D_A +_q D_B$. Then, $D_A +_q D_B \subseteq \Hyp(\FB(D_A +_q D_B))$. Comparing their sizes we get that $mn \leq |\Hyp(\FB(D_A +_q D_B))|$, and the result follows from the definition of $\xi$.
\end{proof}

Despite the fact that the bound of Proposition \ref{proposition:polynomial_bound} is not a closed formula, we will see in Corollary \ref{corollary:recursive_hyperbolic} that we can easily compute it. Because of this, in the Appendix we will be able to compare the footprints of our constructions with this bound.

Now we introduce different solutions to Problem \ref{problem:polynomial}. We construct the solutions and study the parameters in general, but we group them according to on which regime of the parameters $q$ and $l$ they yield better results. Observe that there is a tradeoff between these two parameters: more variables lead to worse solutions but they are needed when working over fields of small size. This is, either $q$ or $l$ have to be sufficiently large.

\subsection{Few variables and moderately small field size} \label{subsection:few_variables}

We start studying constructions to solve Problem \ref{problem:polynomial} that behave well for small $l$ and moderate $q$. As a naive generalization of the chosen degrees in polynomial codes \cite{polynomial_codes}, we introduce Construction \ref{construction:box_poly}.

\begin{notation}
    Let $\textbf{s}, \textbf{k} \in \mathbb{N}^l$, we write $\textbf{k} \ast \textbf{s} := (k_1 s_1, k_2 s_2, \ldots, k_l s_l)$. If $S \subseteq \mathbb{N}^l$, then we write $\textbf{k} \ast S := \{ \textbf{k} \ast \textbf{s} \in \mathbb{N}^l \such \textbf{s} \in S\}$. This operation, sometimes denoted by $\dot{\ast}$, is often referred to as the Schur product or the star product \cite{randriambololona2015products}.
\end{notation}

\begin{construction}[Box] \label{construction:box_poly}
    Let $\textbf{m}, \textbf{n} \in \mathbb{N}^\ell_{< q}$ such that $m_i n_i \leq q$ for every $i = 1, 2, \ldots, l$. Define
    \begin{equation*}
        \begin{split}
            &D_A := \{\textbf{a} \in \mathbb{N}^l_{<q} \such a_i < m_i\},\\
            &D_B := \{\textbf{m} \ast \textbf{k} \in \mathbb{N}^l_{<q} \such k_i < n_i\}.
        \end{split}
    \end{equation*}
    Because of $m_i n_i \leq q$, we have that $\textbf{a} + \textbf{b} \in \mathbb{N}^l_{<q}$ for every $\textbf{a} \in D_A$ and $\textbf{b} \in D_B$, resulting in $D_A +_q D_B = D_A + D_B$ being the classical Minkowski sum. By Euclidean division by $m_i$, we deduce that $|D_A + D_B| = |D_A| |D_B|$, and so $D_A$ and $D_B$ are a solution to Problem \ref{problem:polynomial}. Moreover, $|D_A| = \prod_{i=1}^\ell m_i$, $|D_B| = \prod_{i=1}^\ell n_i$ and
    \begin{equation*}
        \FB(D_A + D_B) = \prod_{i=1}^\ell \left (q - m_i + 1 - m_i(n_i - 1) \right ) = \prod_{i=1}^\ell (q - m_i n_i + 1),
    \end{equation*}
    so the recovery threshold is $q^l - \prod_{i=1}^\ell (q - m_i n_i + 1) + 1$.
\end{construction}

The idea behind Construction \ref{construction:box_poly} is to pick $D_A$ and $D_B^\prime$ as two hypercubes (boxes) and obtain $D_B$ by expanding $D_B^\prime$ with the coordinatewise multiplication (star product), ensuring that $|D_A + D_B| = |D_A| |D_B|$. This construction can be generalized selecting $D_A$ and $D_B^\prime$ not necessarily as boxes, and obtaining $D_B$ with the corresponding expansion as Proposition \ref{proposition:expand_DB} shows.

\begin{proposition} \label{proposition:expand_DB}
    Let $D_A, D_B^\prime \subseteq \mathbb{N}_{<q}^l$ and $m_i := 1 + \max_{\textbf{a}, \textbf{a}^\prime \in D_A} |a_i - a_i^\prime|$. Consider $D_B := \textbf{m} \ast D_B^\prime$, where $\textbf{m} := (m_1, m_2, \ldots, m_l)$. If $\max_{(\textbf{a}, \textbf{b}) \in D_A \times D_B} (a_i + b_i) < q$ for every $i = 1, 2, \ldots, l$, then $|D_A +_q D_B| = |D_A| |D_B|$, thus $D_A$ and $D_B$ are a solution to Problem \ref{problem:polynomial}.
\end{proposition}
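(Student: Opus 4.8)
The plan is to show that the map $(\textbf{a}, \textbf{b}') \mapsto (\textbf{a} + \textbf{m} \ast \textbf{b}')_q$ is injective on $D_A \times D_B'$; combined with the hypothesis $\max_i (a_i + b_i) < q$, which forces $(\cdot)_q$ to act trivially on every sum $\textbf{a} + \textbf{b}$ that actually occurs, this will give $|D_A +_q D_B| = |D_A + D_B| = |D_A| \, |D_B|$, and condition $|D_B| = |D_B'|$ follows from the same argument applied to the pair $(\textbf{0}, \textbf{b}')$. So the real content is the injectivity of the ordinary Minkowski-sum parametrization $(\textbf{a}, \textbf{b}') \mapsto \textbf{a} + \textbf{m} \ast \textbf{b}'$.

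First I would reduce to a single coordinate: since everything is defined coordinatewise, it suffices to prove that for fixed $i$, the map $(a_i, b_i') \mapsto a_i + m_i b_i'$ is injective when restricted to the relevant ranges — namely $a_i$ ranges over $\{\alpha_i : \textbf{a} \in D_A\}$, which by definition of $m_i$ is contained in an interval of length $m_i$ (i.e.\ $\max - \min \le m_i - 1$), and $b_i'$ ranges over the (integer) coordinates appearing in $D_B'$. Actually one must be slightly careful: injectivity coordinatewise does not immediately give injectivity of the tuple map, because two distinct pairs $(\textbf{a}, \textbf{b}')$, $(\textbf{a}'', \textbf{b}'')$ could agree in each coordinate-sum while differing — but if for every $i$ the coordinate decomposition $c_i = a_i + m_i b_i'$ is unique, then from $\textbf{a} + \textbf{m} \ast \textbf{b}' = \textbf{a}'' + \textbf{m} \ast \textbf{b}''$ we recover $a_i = a_i''$ and $b_i' = b_i''$ for all $i$, hence $\textbf{a} = \textbf{a}''$ and $\textbf{b}' = \textbf{b}''$. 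So the tuple statement does follow formally from the coordinatewise one.

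The key step, then, is the one-dimensional claim: if $a, a''$ lie in an interval of integers of length $\le m - 1$ (so $|a - a''| \le m - 1 < m$) and $a + m b' = a'' + m b''$ with $b', b'' \in \mathbb{Z}_{\ge 0}$, then $a = a''$ and $b' = b''$. I would argue $m(b' - b'') = a'' - a$, so $m \mid (a'' - a)$; but $|a'' - a| < m$ forces $a'' - a = 0$, and then $m(b' - b'') = 0$ with $m \ge 1$ gives $b' = b''$. This is essentially uniqueness of Euclidean division, exactly as invoked in Construction \ref{construction:box_poly}, except that here the residues are shifted into $[\min_i a_i, \min_i a_i + m_i)$ rather than $[0, m_i)$ — a harmless translation. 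One subtlety to state explicitly is that $m_i \ge 1$ always (it is $1 + \max|a_i - a_i'| \ge 1$), so no degenerate division occurs, and that the hypothesis $\max (a_i + b_i) < q$ is what guarantees $(\textbf{a}+\textbf{b})_q = \textbf{a} + \textbf{b}$ so that the reduced and unreduced Minkowski sums literally coincide as sets.

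I do not expect a serious obstacle here; the proof is a short packaging of Euclidean division plus a translation-invariance observation, and the only thing requiring care is the bookkeeping that turns the coordinatewise uniqueness into the tuple-level injectivity statement and the final count $|D_A +_q D_B| = |D_A|\,|D_B|$. Once injectivity is in hand, the conclusion that $D_A, D_B$ form a solution to Problem \ref{problem:polynomial} is immediate from the definition: item 1 on sizes holds because $|D_B| = |\textbf{m} \ast D_B'| = |D_B'|$ (the star product with an all-positive vector $\textbf{m}$ is injective on $\mathbb{N}^\ell$, again since each $m_i \ge 1$), and item 2 is exactly the equality we just established.
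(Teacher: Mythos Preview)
Your proposal is correct and follows essentially the same approach as the paper: both arguments reduce to the coordinatewise observation that $|a_i - a_i'| < m_i$ while distinct elements of $D_B$ differ in some coordinate by a multiple of $m_i$, which is precisely your Euclidean-division step. The paper's proof is simply a terser packaging of the same contradiction $|a_i - a_i'| < m_i \le |b_i - b_i'|$, without the explicit bookkeeping you include about tuple-versus-coordinatewise injectivity.
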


\begin{proof}
    Let $(\textbf{a}, \textbf{b}), (\textbf{a}^\prime, \textbf{b}^\prime) \in D_A \times D_B$ such that $\textbf{a} + \textbf{b} = \textbf{a}^\prime + \textbf{b}^\prime$, then $|a_i - a_i^\prime| = |b_i - b_i^\prime|$ for every $i = 1, 2, \ldots, l$. If $(\textbf{a}, \textbf{b}) \neq (\textbf{a}^\prime, \textbf{b}^\prime)$, then $\textbf{a} \neq \textbf{a}^\prime$ and $\textbf{b} \neq \textbf{b}^\prime$ and
    \begin{equation*}
        |a_i - a^\prime_i| < m_i \leq |b_i - b^\prime_i|,
    \end{equation*}
    for some $i$. This contradiction yields $(\textbf{a}, \textbf{b}) = (\textbf{a}^\prime, \textbf{b}^\prime)$, so $|D_A +_q D_B| = |D_A| |D_B|$.
\end{proof}

The approach of Proposition \ref{proposition:expand_DB} is expanding $D_B^\prime$ by the limits of the smallest box containing $D_A$. We can obtain a ``better'' solution by simply selecting $D_A$ as the whole box delimited by $\textbf{m}$. In fact, we can improve this construction as Construction \ref{construction:better_box} shows, choosing $D_B^\prime$ as large as it can be.

\begin{construction}[Better box] \label{construction:better_box}
    Let $\textbf{m} \in \mathbb{N}_{<q}^l$ and $F \in \mathbb{N}$. Consider the sets
    \begin{equation*}
        \begin{split}
            &D_A := \{\textbf{a} \in \mathbb{N}^l_{<q} \such a_i < m_i\},\\
            &D_B := \left \{ \textbf{m} \ast \textbf{b} \in \mathbb{N}_{<q}^l \such \prod_{i=1}^l (q - m_i + 1 - m_i b_i) \geq F \right \}.
            \end{split}
    \end{equation*}
    Proposition \ref{proposition:expand_DB} ensures $D_A$ and $D_B$ to be a solution to Problem \ref{problem:polynomial}. Moreover, by the definition of $D_B$, we have $\FB(D_A + D_B) \geq F$, so the recovery threshold is $k + 1 \leq q^l - F + 1$.
\end{construction}

Construction \ref{construction:better_box} is at least as good as (in the majority of cases is strictly better than) Construction \ref{construction:box_poly}, since one can always set $F$ in Construction \ref{construction:better_box} to be $\FB(D_A + D_B)$ with $D_A$ and $D_B$ as in Construction \ref{construction:box_poly}, obtaining a better box solution with at least the same footprint and at least the same sizes as sets $D_A$ and $D_B$. The only non explicit parameter of Construction \ref{construction:better_box} is the size of $D_B$. Proposition \ref{proposition:recursive_better_box} gives a useful recurrence for computing it.

\begin{definition}
    Let $\textbf{m} \in \mathbb{N}_{<q}^l$ and $F \in \mathbb{N}$. Define $q_i := q - m_i + 1$ and
    \begin{equation*}
        D_B(F, l) := \left \{ (m_1 b_1, m_2 b_2, \ldots, m_l b_l) \in \mathbb{N}_{<q}^l \such \prod_{i=1}^{l}(q_i - m_i b_i) \geq F \right \}.
    \end{equation*}
    In particular, $D_B = D_B(F,l)$ in Construction \ref{construction:better_box}.
\end{definition}

\begin{proposition} \label{proposition:recursive_better_box}
    Let $\textbf{m} \in \mathbb{N}_{<q}^l$ and $F \in \mathbb{N}$. Then,
    \begin{equation*}
        \begin{split}
        |D_B(F,l)| =
            \begin{cases}
            \max \{0, \lfloor \frac{q_1 - F}{m_1} \rfloor + 1\} &\text{if } l = 1,\\
            \sum_{b=0}^{\lfloor \frac{q_l-1}{m_l}\rfloor} |D_B(\lceil \frac{F}{q_l-m_l b}\rceil, l-1)| &\text{if } l > 1.
            \end{cases}
        \end{split}
    \end{equation*}
\end{proposition}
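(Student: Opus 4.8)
The plan is to prove the two cases separately, handling the base case $l=1$ by a direct count and the case $l>1$ by partitioning $D_B(F,l)$ according to the value of the last coordinate.

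For the base case $l = 1$, the set $D_B(F,1)$ consists of those $m_1 b_1 \in \mathbb{N}_{<q}$ with $q_1 - m_1 b_1 \geq F$, i.e.\ with $b_1 \leq (q_1 - F)/m_1$. Since $b_1$ ranges over nonnegative integers, the count is $\lfloor (q_1 - F)/m_1 \rfloor + 1$ whenever this quantity is nonnegative (which forces $q_1 \geq F$), and $0$ otherwise; this gives the stated $\max\{0, \lfloor (q_1-F)/m_1 \rfloor + 1\}$. One should double-check that the membership condition $m_1 b_1 \in \mathbb{N}_{<q}$, i.e.\ $m_1 b_1 \leq q-1$, is automatically implied: indeed $q_1 - m_1 b_1 \geq F \geq 1$ gives $m_1 b_1 \leq q_1 - 1 = q - m_1 \leq q-1$, so no extra truncation is needed. (If one wanted to allow $F = 0$ the bound would need the extra constraint, but the definition takes $F \in \mathbb{N}$, and the intended regime is $F \geq 1$.)

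For the recursive case $l > 1$, I would write each element of $D_B(F,l)$ as $(m_1 b_1, \ldots, m_l b_l)$ and condition on the last coordinate $m_l b_l$. The constraint $m_l b_l \in \mathbb{N}_{<q}$ together with $b_l \geq 0$ forces $b_l \in \{0, 1, \ldots, \lfloor (q_l - 1)/m_l \rfloor\}$ — here I should check that this is exactly the range of $b_l$ that can possibly occur, namely $m_l b_l \leq q - 1$ is equivalent to $b_l \le (q-1)/m_l$, but also $q_l - m_l b_l = q - m_l + 1 - m_l b_l \ge F \ge 1$ already restricts $m_l b_l \le q_l - 1 = q - m_l$, so in fact $b_l \le (q_l-1)/m_l = (q-m_l)/m_l$; either way the summation index set in the statement, $0 \le b \le \lfloor (q_l-1)/m_l \rfloor$, contains all feasible $b_l$, and for the few extra values of $b$ (if any) the inner set is empty, so they contribute $0$ and do no harm. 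Fixing $b_l = b$, the remaining coordinates $(m_1 b_1, \ldots, m_{l-1} b_{l-1})$ must satisfy $\prod_{i=1}^{l-1}(q_i - m_i b_i) \geq F/(q_l - m_l b)$; since the left-hand side is an integer, this is equivalent to $\prod_{i=1}^{l-1}(q_i - m_i b_i) \geq \lceil F/(q_l - m_l b)\rceil$, which is precisely the defining condition of $D_B(\lceil F/(q_l - m_l b)\rceil, l-1)$. Summing $|D_B(\lceil F/(q_l - m_l b)\rceil, l-1)|$ over $b$ in the stated range yields the claim.

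The only mildly delicate point — and the one I would be most careful about — is the interplay between the ``ambient'' constraint $m_l b_l < q$ and the ``hyperbolic'' constraint $q_l - m_l b_l \ge F$ (and, recursively, the fact that $q_l - m_l b$ is positive for every $b$ in the summation range, so the ceiling $\lceil F/(q_l - m_l b)\rceil$ is well defined). Once one observes that $q_l - m_l b \ge q_l - m_l\lfloor(q_l-1)/m_l\rfloor \ge 1 > 0$ throughout the index range, the equivalence $x \ge y/c \iff x \ge \lceil y/c\rceil$ for positive integer $x$ and positive real $c$ closes the argument with no further subtlety.
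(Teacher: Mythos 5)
Your proof is correct and follows essentially the same route as the paper's: the base case $l=1$ by a direct count, and the case $l>1$ by partitioning $D_B(F,l)$ on the last coordinate and using the integrality of $\prod_{i=1}^{l-1}(q_i-m_ib_i)$ to replace $F/(q_l-m_lb)$ by its ceiling. Your extra checks (that the ambient constraint is subsumed when $F\geq 1$, and that $q_l-m_lb>0$ throughout the summation range) are details the paper leaves implicit, but they do not change the argument.
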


\begin{proof}
    If $l = 1$, then
    \begin{equation*}
        \begin{split}
            |D_B(F,1)| &= |\{ m_1 b_1 \in \mathbb{N}_{<q} \such q_1 - m_1 b_1 \geq F\}|\\
            &= \left | \left \{ m_1 b_1 \in \mathbb{N}_{<q} \such \left \lfloor \frac{q_1 - F}{m_1} \right \rfloor \geq b_1 \right \} \right |\\
            &= \max \left \{0, \left \lfloor \frac{q_1 - F}{m_1} \right \rfloor + 1 \right\}.
        \end{split}
    \end{equation*}
    If $l > 1$, we partition $D_B(F,l)$ and the sum of the sizes gives us the formula:
    \begin{equation*}
        \begin{split}
            |D_B(F,l)| &= \left |\bigcup_{b=0}^{\lfloor \frac{q_l-1}{m_l} \rfloor} \left \{ (m_1 b_1, \ldots, m_{l-1} b_{l-1}, m_l b) \in \mathbb{N}^{l}_{<q} \such \prod_{i=1}^{l-1}(q_i - m_i b_i) \geq  \left \lceil \frac{F}{q_l-m_l b} \right \rceil \right \} \right| \\
            &= \sum_{b=0}^{\lfloor \frac{q_l-1}{m_l} \rfloor} \left | \left \{(m_1 b_1, \ldots, m_{l-1} b_{l-1}) \in \mathbb{N}_{<q}^{l-1} \such \prod_{i=1}^{l-1}(q_i - m_i b_i) \geq \left \lceil \frac{F}{q_l-m_l b} \right \rceil \right \} \right |\\
            &= \sum_{b=0}^{\lfloor \frac{q_l-1}{m_l}\rfloor} \left |D_B \left (\left \lceil \frac{F}{q_l-m_l b} \right \rceil, l-1 \right ) \right |.
        \end{split}
    \end{equation*}
\end{proof}

In Proposition \ref{proposition:recursive_better_box}, the definition of $D_B(F,l)$ recovers that of $\Hyp_q(F,l)$, in the sense that $D_B(F,l) = \Hyp(F,l)$ by making $\textbf{m} = (1,1, \ldots, 1)$. Moreover, observe that the proof can be adapted to compute the set $D_B(F,l)$ itself, not only its size, by using a backtracking-like algorithm, for example.

As a corollary of Proposition \ref{proposition:recursive_better_box} we derive the result from \cite{geil2001hyperbolic} that computes the size of hyperbolic sets. The motivation of this result is to compute the bound on the recovery threshold of Proposition \ref{proposition:polynomial_bound}. We state it in Corollary \ref{corollary:recursive_hyperbolic}.

\begin{corollary}[\cite{geil2001hyperbolic}] \label{corollary:recursive_hyperbolic}
    Let $F \geq 1$. Then
    \begin{equation*}
        \begin{split}
            |\Hyp_q(F,\ell)| =
            \begin{cases}
                \max \{ 0, q-F+1 \} &\text{if } \ell = 1\\
                \sum_{i=1}^q |\Hyp_q(\lceil \frac{F}{i} \rceil, \ell - 1)| &\text{if } \ell > 1.
            \end{cases}
        \end{split}
    \end{equation*}
\end{corollary}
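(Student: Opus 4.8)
The plan is to obtain Corollary \ref{corollary:recursive_hyperbolic} as the special case of Proposition \ref{proposition:recursive_better_box} in which $\textbf{m} = (1,1,\ldots,1)$. First I would record the observation already noted in the text immediately before the corollary: when $m_i = 1$ for all $i$, we have $q_i = q - m_i + 1 = q$, and the defining condition of $D_B(F,\ell)$ becomes $\prod_{i=1}^\ell (q - b_i) \geq F$ with $b_i \in \mathbb{N}_{<q}$, which is exactly the condition defining $\Hyp_q(F,\ell)$. Hence $D_B(F,\ell) = \Hyp_q(F,\ell)$ as sets, and in particular $|D_B(F,\ell)| = |\Hyp_q(F,\ell)|$, for every $F \in \mathbb{N}$ and every $\ell$.

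Next I would simply substitute $m_i = 1$ and $q_i = q$ into the two branches of the recurrence of Proposition \ref{proposition:recursive_better_box}. For $\ell = 1$: $\max\{0, \lfloor (q_1 - F)/m_1 \rfloor + 1\} = \max\{0, (q - F) + 1\} = \max\{0, q - F + 1\}$, which matches the $\ell = 1$ branch of the corollary. For $\ell > 1$: the summation index runs over $b = 0, 1, \ldots, \lfloor (q_\ell - 1)/m_\ell \rfloor = q - 1$, and the summand becomes $|D_B(\lceil F/(q_\ell - m_\ell b)\rceil, \ell - 1)| = |\Hyp_q(\lceil F/(q - b)\rceil, \ell - 1)|$ using the identification from the first step. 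Re-indexing the sum by $i := q - b$ so that $i$ runs from $q$ down to $1$, this is $\sum_{i=1}^q |\Hyp_q(\lceil F/i \rceil, \ell - 1)|$, which is exactly the $\ell > 1$ branch of the corollary. The hypothesis $F \geq 1$ is what one needs for the recursive calls $\lceil F/i \rceil$ to stay in $\mathbb{N}_{\geq 1}$ so that the recursion is well-founded; I would note this in passing but it is already implicit in Proposition \ref{proposition:recursive_better_box}.

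There is essentially no obstacle here: the corollary is a corollary in the strict sense, a direct specialization. The only thing requiring a word of care is the cosmetic re-indexing of the summation (from $b$ running $0$ to $q-1$, to $i = q - b$ running $1$ to $q$) and checking that $\lfloor (q-1)/1 \rfloor = q - 1$ gives precisely the right number of terms. If one wanted to be fully self-contained rather than invoking Proposition \ref{proposition:recursive_better_box}, one could instead reprove the corollary directly by the same partition argument — partitioning $\Hyp_q(F,\ell)$ according to the value of the last coordinate $b_\ell = b$, observing that for fixed $b$ the remaining coordinates must satisfy $\prod_{i=1}^{\ell-1}(q - b_i) \geq F/(q - b)$, i.e. $\geq \lceil F/(q-b)\rceil$ since the left side is an integer, hence form a copy of $\Hyp_q(\lceil F/(q-b)\rceil, \ell-1)$ — but since the more general Proposition \ref{proposition:recursive_better_box} is already available, the one-line specialization is the cleaner route.
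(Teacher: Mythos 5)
Your proposal is correct and follows exactly the paper's own route: the paper proves this corollary by setting $\textbf{m} = (1,1,\ldots,1)$ and re-indexing with $i = q - b$ in Proposition \ref{proposition:recursive_better_box}, precisely as you do. Your additional verification that $D_B(F,\ell) = \Hyp_q(F,\ell)$ under this specialization is the same observation the paper makes in the paragraph preceding the corollary.
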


\begin{proof}
    By setting $i=q-b$ and $\textbf{m} = (1,1,\ldots,1)$, the results follows as a particular case of Proposition \ref{proposition:recursive_better_box}.
\end{proof}

When restricting to $l = 2$, only two variables, observe that we obtain simpler formulas for computing the size of both the set $D_B$ in Construction \ref{construction:better_box} and the set $\Hyp_q(F)$.

\begin{corollary}
    Let $l=2$ and $\textbf{m} \in \mathbb{N}_{<q}^2$. Then
    \begin{equation*}
        |D_B(F,2)| = \sum_{b=0}^{\lfloor \frac{q_2-1}{m_2}\rfloor} \max \left \{0, \left \lfloor \frac{q_1 - \left \lceil \frac{F}{q_2 - m_2b} \right \rceil}{m_1} \right \rfloor + 1 \right \}.
    \end{equation*}
    Regarding hyperbolic sets, for $F \geq 1$ we have
    \begin{equation*}
        |\Hyp_q(F,2)| = \sum_{k=1}^q \max \left \{0 , q - \left \lceil \frac{F}{k} \right \rceil + 1 \right \}.
    \end{equation*}
\end{corollary}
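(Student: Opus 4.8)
The plan is to obtain each formula as the $l=2$ specialization of the recursive expressions already established in Proposition \ref{proposition:recursive_better_box} and Corollary \ref{corollary:recursive_hyperbolic}. The key observation is that in the recursions the base case ($l=1$, or $\ell=1$) gives a closed-form count, so applying one step of the recursion and then substituting that base case produces a single summation with no further recursion.

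First I would handle $|D_B(F,2)|$. I apply Proposition \ref{proposition:recursive_better_box} with $l=2$ to get
\begin{equation*}
    |D_B(F,2)| = \sum_{b=0}^{\lfloor \frac{q_2-1}{m_2}\rfloor} \left| D_B\!\left(\left\lceil \tfrac{F}{q_2-m_2 b}\right\rceil, 1\right)\right|,
\end{equation*}
and then substitute the $l=1$ case of the same proposition, namely $|D_B(G,1)| = \max\{0, \lfloor (q_1-G)/m_1 \rfloor + 1\}$, with $G = \lceil F/(q_2-m_2 b)\rceil$. This yields exactly the claimed sum, after noting that $G$ is a positive integer so the ceiling is well-defined and the expression $\lfloor (q_1 - \lceil F/(q_2-m_2 b)\rceil)/m_1\rfloor + 1$ appears verbatim. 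One small point to check: the summation index $b$ in the claim should range exactly as in the proposition, i.e. over $0 \le b \le \lfloor (q_2-1)/m_2\rfloor$, which it does.

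Next I would handle $|\Hyp_q(F,2)|$. By Corollary \ref{corollary:recursive_hyperbolic} with $\ell = 2$,
\begin{equation*}
    |\Hyp_q(F,2)| = \sum_{i=1}^{q} \left|\Hyp_q\!\left(\left\lceil \tfrac{F}{i}\right\rceil, 1\right)\right|,
\end{equation*}
and since $F \ge 1$ implies $\lceil F/i \rceil \ge 1$, the $\ell=1$ base case applies to give $|\Hyp_q(\lceil F/i\rceil,1)| = \max\{0, q - \lceil F/i\rceil + 1\}$. Renaming the summation index $i$ as $k$ gives precisely the stated formula. Alternatively, as the excerpt already notes that $D_B(F,l) = \Hyp_q(F,l)$ when $\textbf{m} = (1,\dots,1)$ (so that $q_i = q$), the hyperbolic formula also follows directly from the $|D_B(F,2)|$ formula by setting $m_1 = m_2 = 1$; I would mention this as a sanity check but derive it cleanly from the corollary.

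There is essentially no obstacle here: the statement is a routine corollary, and the only things to be careful about are (i) keeping the $\max\{0,\cdot\}$ clauses so the counts are never negative, (ii) verifying that the arguments of the inner ceilings are positive so the $\ell=1$ base case is the applicable one, and (iii) matching the index ranges exactly. I would write the proof in two short displayed chains of equalities, one per formula, each being ``apply the recursion once, then substitute the base case,'' with a one-line remark that the hyperbolic formula is the $\textbf{m}=(1,1)$ case of the first.
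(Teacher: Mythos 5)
Your proposal is correct and follows exactly the paper's route: the paper's proof is simply ``Immediate from $l=2$ in Proposition \ref{proposition:recursive_better_box} and Corollary \ref{corollary:recursive_hyperbolic}, respectively,'' and your unrolling of one recursion step followed by substitution of the $l=1$ base case is precisely that argument written out. The extra checks you mention (index range, positivity of the inner ceilings, the $\textbf{m}=(1,1)$ specialization) are all consistent with the paper.
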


\begin{proof}
    Immediate from $l=2$ in Proposition \ref{proposition:recursive_better_box} and Corollary $\ref{corollary:recursive_hyperbolic}$, respectively.
\end{proof}

Tables \ref{table:poly_box1} and \ref{table:poly_box2} in the Appendix contain the parameters for several examples of Constructions \ref{construction:box_poly} and \ref{construction:better_box} when defined over two variables together with the bound given by Proposition \ref{proposition:polynomial_bound}.

\subsection{Many variables and small field size} \label{subsection:many_variables}

Now we give a solution to Problem \ref{problem:polynomial} that works well when multiplying matrices over a very small field, that is very small $q$. We state this solution in Construction \ref{construction:separation_of_variables}.

\begin{construction}[Separation of variables] \label{construction:separation_of_variables}
    Let $\ell = m^\prime + n^\prime$, $0 \leq F_A \leq m^\prime$ and $0 \leq F_B \leq n^\prime$. Define
    \begin{equation*}
        \begin{split}
            &D_A := \{(a_1, a_2, \ldots, a_{m^\prime}, 0, \ldots, 0) \in \mathbb{N}^\ell_{< q} \such \textbf{a} \in \Hyp_q(F_A, m^\prime)\},\\
            &D_B := \{(0, \ldots, 0, b_{m^\prime + 1}, b_{m^\prime + 2}, \ldots, b_{\ell}) \in \mathbb{N}^\ell_{< q} \such \textbf{b} \in \Hyp_q(F_B, n^\prime) \}.
        \end{split}
    \end{equation*}
    We have that $|D_A + D_B| = |D_A| |D_B|$ and, from Definition \ref{definition:hyperbolic}, we obtain the bound
    \begin{equation*}
        \FB(D_A + D_B) = \min \left \{\prod_{i=1}^{m^\prime} (q - a_i) \such \textbf{a} \in D_A \right \} \min \left \{\prod_{i=1}^{n^\prime} (q - b_{m^\prime + i}) \such \textbf{b} \in D_B \right \} \geq F_A F_B,
    \end{equation*}
    yielding a recovery threshold of $k + 1\leq q^l - F_A F_B + 1$. Notice that $|D_A| = |\Hyp_q(F_A, m^\prime)|$ and $|D_B| = |\Hyp_q(F_B, n^\prime)|$. This is the reason for selecting $D_A$ and $D_B$ from hyperbolic sets, because, by Definition \ref{definition:hyperbolic}, this maximizes their sizes when using this method of separation of variables.
\end{construction}

\begin{remark} \label{remark:box_f2}
    Construction \ref{construction:separation_of_variables} may seem naive, but in general it is the best option when considering very small values of $q$, like $2, 3$ or $5$. For these $q$, Construction \ref{construction:better_box} (and also Construction \ref{construction:box_poly}) has very restricted parameters. For example, for $q=2$, the box $\mathbb{N}_{<q}^l$ has only two elements in each coordinate, so in Construction \ref{construction:better_box}, $\textbf{m}$ is a vector consisting of $0$s and $1$s. If $m_i = 0$ for some $i$, then $D_A$ is empty. So $\textbf{m}$ has to be $(1,1,\ldots,1)$, but then $D_A$ consists of only one element, $(0,0,\ldots, 0)$. We conclude that, for $q=2$, we are restricted in Construction \ref{construction:better_box} to $m=1$, that is, performing DMM only by splitting matrix $B$ and not $A$, in other words, ``introducing redundancy only in one of the matrices''. This strategy of encoding only one of the operands was one of the first ideas of the coded computation schemes \cite{dutta2017coded}, and presents the disadvantage of performing bad in terms of the communication cost. Observe as well that for $q=2$ and $l=2$, we have $|\mathbb{N}_{<2}^2| = 4$ and so $mn \leq 4$, making evident the necessity of using a large number of variables when the field size is small.
\end{remark}

Similar to Construction \ref{construction:better_box}, in Construction \ref{construction:separation_of_variables} the sizes of $D_A$ and $D_B$ are not explicit. Fortunately, this is not a problem when giving examples since we can efficiently compute them by using the recurrence of Corollary \ref{corollary:recursive_hyperbolic}.

To conclude this section, we focus on the case of matrices defined over $\mathbb{F}_2$. In this extremal scenario, Constructions \ref{construction:box_poly} and \ref{construction:better_box} are not applicable by Remark \ref{remark:box_f2}. In general for $q=2$, $mn \leq 2^l$, or equivalently $l \geq \log_2(mn)$, is required. As far as the authors know, there are currently no general methods for performing DMM with straggler tolerance over $\mathbb{F}_2$, apart from the one presented in this subsection.

\begin{proposition} \label{proposition:hyp2}
    Let $F, l \in \mathbb{N}$. Then
    \begin{equation*}
        |\Hyp_2(F,l)| = \sum_{i=0}^{l - \lceil \log_2 F \rceil } \binom{l}{i}.
    \end{equation*}
\end{proposition}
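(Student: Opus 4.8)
The plan is to evaluate $\Hyp_2(F,l)$ directly by translating the product condition in Definition \ref{definition:hyperbolic} into a condition on the Hamming weight. First I would note that a point $\textbf{a} \in \mathbb{N}_{<2}^l$ is just a binary vector, and for each coordinate $2 - a_i \in \{1, 2\}$; hence $\prod_{i=1}^l (2 - a_i) = 2^{\,l - w(\textbf{a})}$, where $w(\textbf{a}) := \#\{i : a_i = 1\}$ is the Hamming weight of $\textbf{a}$. So membership $\textbf{a} \in \Hyp_2(F,l)$ is equivalent to $2^{\,l - w(\textbf{a})} \geq F$.

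Next I would take base-$2$ logarithms (here $F \geq 1$, as elsewhere in the section): $2^{\,l-w(\textbf{a})} \geq F \iff l - w(\textbf{a}) \geq \log_2 F \iff w(\textbf{a}) \leq l - \log_2 F$. Since $w(\textbf{a})$ is a nonnegative integer and $l$ is an integer, this last inequality holds if and only if $w(\textbf{a}) \leq \lfloor l - \log_2 F \rfloor = l - \lceil \log_2 F \rceil$, using the identity $\lfloor n - x \rfloor = n - \lceil x \rceil$ valid for every integer $n$. Therefore $\Hyp_2(F,l) = \{\textbf{a} \in \{0,1\}^l : w(\textbf{a}) \leq l - \lceil \log_2 F\rceil\}$, and counting binary vectors by weight ($\binom{l}{i}$ of weight $i$) yields $|\Hyp_2(F,l)| = \sum_{i=0}^{l - \lceil \log_2 F\rceil} \binom{l}{i}$. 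I would then record the conventions/boundary checks: the sum is empty (hence $0$) when $l - \lceil \log_2 F\rceil < 0$, consistent with the fact that $2^{l-w} \geq F$ is unsatisfiable once $F > 2^l$; and for $F = 1$ the formula gives $\sum_{i=0}^l \binom{l}{i} = 2^l = |\mathbb{N}_{<2}^l|$, as it must.

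There is essentially no hard step; the only point demanding a moment's care is the passage from the real inequality $w(\textbf{a}) \leq l - \log_2 F$ to its integer form $w(\textbf{a}) \leq l - \lceil \log_2 F\rceil$ (the floor/ceiling bookkeeping) together with fixing the empty-sum convention. As an alternative I could induct on $l$ using the recursion $|\Hyp_2(F,l)| = |\Hyp_2(F,l-1)| + |\Hyp_2(\lceil F/2\rceil, l-1)|$ from Corollary \ref{corollary:recursive_hyperbolic} and Pascal's rule $\binom{l}{i} = \binom{l-1}{i} + \binom{l-1}{i-1}$, but the direct weight-counting argument above is shorter and more transparent, so that is the route I would take.
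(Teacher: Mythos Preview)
Your proposal is correct and follows essentially the same approach as the paper's own proof: translate the product $\prod_i (2-a_i)$ into $2^{l-\wt(\textbf{a})}$, reduce the membership condition to a bound on the Hamming weight, and count binary vectors by weight. The only difference is that you spell out the floor/ceiling bookkeeping and the boundary cases more carefully than the paper does.
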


\begin{proof}
    Let $\textbf{a} \in \mathbb{N}_{<q}^l$, we denote by $\wt(\textbf{a})$ the Hamming weight of $\textbf{a}$, i.e., the number of nonzero coordinates of $\textbf{a}$. When $q = 2$, we observe that given $\textbf{a} \in \mathbb{N}_{<2}^l = \{0,1\}^l$, we have that $\prod_{i=1}^l (2 - a_i) = 2^{l - \wt (\textbf{a})}$. Hence
    \begin{equation*}
    \begin{split}
        |\Hyp_2(F,l)| &= |\{\textbf{a} \in \{0,1\}^l \such 2^{l - \wt(\textbf{a})} \geq F\}|\\
        &= |\{\textbf{a} \in \{0,1\}^l \such \wt(\textbf{a}) \leq l - \lceil \log_2 F \rceil\}| = \sum_{i=0}^{l - \lceil \log_2 F \rceil } \binom{l}{i}.
    \end{split}
    \end{equation*}
\end{proof}

\begin{remark}
    Observe that, when $q=2$, the evaluation code defined by a hyperbolic set is a Reed-Muller code. So separation of variables (Construction \ref{construction:separation_of_variables}) in $\mathbb{F}_2$ proposes encoding the matrices $A$ and $B$ using two Reed-Muller codes.
\end{remark}

Proposition \ref{proposition:hyp2} gives us a way for fast computing the sizes of hyperbolic sets for $q=2$, and so the sizes of $D_A$ and $D_B$ in Construction \ref{construction:separation_of_variables}. In addition, it shows us which are the greatest designed footprints such that $|D_A|$ and $|D_B|$ are maximized: we have to pick $F_1 = 2^\alpha$ and $F_2 = 2^\beta$ for some $\alpha, \beta \in \mathbb{N}$. This motivates the following corollary.

\begin{corollary}
    Let $D_A$ and $D_B$ as in Construction \ref{construction:separation_of_variables} for $q=2$. If $F_A = 2^\alpha$ and $F_B = 2^\beta$, then
    \begin{equation*}
        |D_A| = \sum_{i=0}^{m^\prime - \alpha} \binom{m^\prime}{i} \quad \text{and} \quad |D_B| = \sum_{i=0}^{n^\prime - \beta} \binom{n^\prime}{i}.
    \end{equation*}
\end{corollary}

\begin{proof}
    Straightforward from Proposition \ref{proposition:hyp2}.
\end{proof}

%
Tables \ref{table:poly_varsep1} and \ref{table:poly_varsep2} in the Appendix contain the parameters for several examples of Construction \ref{construction:separation_of_variables} when $q=2$ together with the bound given by Proposition \ref{proposition:polynomial_bound}. Tables \ref{table:poly_varsep3} and \ref{table:poly_varsep4} contain the parameters for $q>2$. Observe that, when $q=2$, Construction \ref{construction:separation_of_variables} is optimal for a large number of non-trivial values of $m$.

\section{Multivariate matdot codes}

Now we follow the trail of matdot codes \cite{polydot_codes}, a different method for performing DMM. Matdot codes achieve lower recovery thresholds than polynomial codes \cite{polynomial_codes} at the expense of having a higher communication and computational cost. With the same ideas, in this section we propose polynomials in several variables for DMM as the algorithm presented in Section \ref{section:polynomial} does, but splitting the matrices in a different way. This yields a different algorithm with lower recovery threshold but higher per-woker computation cost (see Appendix \ref{appendix:complexity}).

Let $A \in \mathbb{F}_q^{r \times s}$ and $B \in \mathbb{F}_q^{s \times t}$, consider the subdivision in blocks of matrices given by
\begin{equation}
    \begin{split}
        A =
        \begin{pmatrix}
            A_1 & A_2 & \ldots & A_m
        \end{pmatrix}, \quad
        B = \begin{pmatrix}
            B_1\\
            B_2\\
            \vdots\\
            B_m\\
        \end{pmatrix}\\ \implies
        A B = A_1 B_1 + A_2 B_2 + \cdots + A_m B_m,
    \end{split}
\end{equation}
where $A_i \in \mathbb{F}_q^{r \times \frac{s}{m}}$ and $B_i \in \mathbb{F}_q^{\frac{s}{m} \times t}$. As in Section \ref{section:polynomial}, we define $p_A \in \mathcal{R}^{ r \times \frac{s}{m}}$ and $p_B \in \mathcal{R}^{\frac{s}{m} \times t}$ such that
\begin{equation*}
    p_A := \sum_{i=1}^m A_i x^{\textbf{a}_i}, \quad p_B := \sum_{i=1}^m B_i x^{\textbf{b}_i},
\end{equation*}
but, this time satisfying that, there exist exactly $m$ distinct pairs $(\textbf{a},\textbf{b}) \in D_A \times D_B$, where $D_A := \{ \textbf{a}_1, \ldots, \textbf{a}_m \} $ and $D_B := \{ \textbf{b}_1, \ldots, \textbf{b}_m \}$, such that $\textbf{d} = (\textbf{a} + \textbf{b})_q$ for a fixed $\textbf{d} \in \mathbb{N}_{<q}^l$. Here, $(\cdot)_q$ denotes the natural sum of exponents of monomials in $\mathcal{R}$, as in Section \ref{section:polynomial}. The property is not arbitrary, since it implies $AB$ to be the coefficient of the monomial $x^\textbf{d}$ in $h := p_A p_B$. Using this fact we can design the following \textbf{algorithm}.

First, the master node shares $p_A(P_i) \in \mathbb{F}_q^{r \times \frac{s}{m}}$ and $p_B(P_i) \in \mathbb{F}_q^{\frac{s}{m} \times t}$ with the $i$-th worker node. Then, the worker nodes compute the matrix multiplications $h(P_i) = p_A(P_i) p_B(P_i)$ and give back the result. When enough worker nodes have responded, the master node uses the evaluations $h(P_i)$ to recover $h$ and, consequently, to obtain $AB$.

The amount of responsive worker nodes necessary to recover $h$ depends on the choice of the sets $D_A$ and $D_B$, as (\ref{equation:optimal_k}) in Section \ref{section:polynomial} summarizes. The objective then is to minimize $k + 1 := q^l - \min_{(\textbf{a}, \textbf{b}) \in D_A \times D_B} \{\prod_{i=1}^\ell (q - (a_i + b_i)_q)\} + 1$, the number of evaluations that ensures we can interpolate $h$. We express this in Problem \ref{problem:matdot} in terms of the sum defined in Notation \ref{notation:minkowski}.

\begin{problem} \label{problem:matdot}
    Find $D_A, D_B \subseteq \mathbb{N}^\ell_{<q}$ such that
    \begin{enumerate}
        \item $|D_A| = |D_B| = m$.
        \item There exists $\textbf{d} \in \mathbb{N}_{<q}^l$, such that there are exactly $m$ pairs $(\textbf{a}_i, \textbf{b}_i) \in D_A \times D_B$ for $i=1,2,\ldots,m$ such that $\textbf{d} = \textbf{a}_i + \textbf{b}_i$ and satisfying that $\textbf{a}_i \neq \textbf{a}_j$ and $\textbf{b}_i \neq \textbf{b}_j$ if $i \neq j$.
        \item $\FB(D_A +_q D_B) := \min \{\prod_{i=1}^\ell (q - (a_i + b_i)_q) \such \textbf{a} \in D_A, \textbf{b} \in D_B\}$ is as large as possible.
    \end{enumerate}
    If $D_A$ and $D_B$ achieve items 1 and 2, we say they are a solution. If item 3 is also achieved, we say they are an optimal solution. Sometimes, we will refer to $D_A$, $D_B$ and $\textbf{d}$ as the solution itself, even though $\textbf{d}$ is dependent on the sets $D_A$ and $D_B$.
\end{problem}

\begin{remark} \label{remark:useless_variables}
    In Problem \ref{problem:matdot}, if $\textbf{d}$ is zero in some coordinate $i$, that implies that both the elements of $D_A$ and $D_B$ are zero in $i$. So we can project over $[l] \setminus \{i\}$ and obtain a lower recovery threshold algorithm since it uses fewer variables but maintains the footprint.
\end{remark}

We proceed studying different solutions to Problem \ref{problem:matdot} and their recovery thresholds.

\subsection{The box and the case $q=2$}

The simplest construction we can come up with is, as in Section \ref{section:polynomial}, choosing a ``box'' for $D_A$ and $D_B$. This is Construction \ref{construction:box_matdot}, which gives a simple way of defining $D_A$ and $D_B$ while generalizing the optimal solution for classical matdot codes.

\begin{construction}[Box] \label{construction:box_matdot}
    Let $\textbf{m} \in \mathbb{N}^\ell_{< q}$ such that $2(m_i - 1) < q$ for every $i = 1, 2, \ldots, l$. Define
    \begin{equation*}
        D_A :=  D_B = \{\textbf{a} \in \mathbb{N}^l_{<q} \such a_i < m_i\}.
    \end{equation*}
    Because of the election of each $m_i$, we have that $\textbf{a} + \textbf{b} \in \mathbb{N}^l_{<q}$ for every $\textbf{a} \in D_A$ and $\textbf{b} \in D_B$, resulting in $D_A +_q D_B = D_A + D_B$ being the classical Minkowski sum. The rest of the definition ensures $D_A$ and $D_B$ to be a solution to Problem \ref{problem:matdot}, with $\textbf{d} = (m_1 - 1, m_2 - 1, \ldots, m_l - 1)$. The footprint results to be
    \begin{equation*}
        \FB(D_A + D_B) = \prod_{i=1}^l (q - 2m_i + 2).
    \end{equation*}
\end{construction}

In Construction \ref{construction:box_matdot} we have the restriction $2(m_i - 1) < q$ which makes it non applicable for lower values of $q$. The analog in Section \ref{section:polynomial} was Construction \ref{construction:box_poly}, which suffered from the same limitation, in particular for $q=2$. We were capable of tackling this issue in Section \ref{section:polynomial} by separating variables (see Construction \ref{construction:separation_of_variables}), but in the case of multivariate matdot codes, using $q=2$ is intractable as Proposition \ref{proposition:no_matdot2} shows.

\begin{notation}
    Let $S \subseteq \mathbb{N}^l$, we denote $\supp(S) := \{i \in [l] \such \exists \textbf{s} \in S \quad s_i \neq 0 \}$.
\end{notation}

\begin{proposition} \label{proposition:no_matdot2}
    Let $D_A$, $D_B$ and $\textbf{d}$ be a solution for Problem \ref{problem:matdot} with $q=2$. Then $\FB(D_A +_q D_B) = 2^{l - |\supp(D_A +_q D_B)|}$. In particular, if $\supp(D_A +_q D_B) = [l]$ (see Remark \ref{remark:useless_variables} for its importance), then $\FB(D_A +_q D_B) = 1$.
\end{proposition}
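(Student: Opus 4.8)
The plan is to analyze the constraint imposed by item 2 of Problem~\ref{problem:matdot} when $q=2$, and show it forces a rigid product structure that pins down the footprint. First I would fix a solution $D_A, D_B, \textbf{d}$ and let $T := \supp(D_A +_q D_B)$. Since $q=2$, every coordinate of every element of $\mathbb{N}^\ell_{<2}$ is $0$ or $1$, and $(\cdot)_q$ sends $a+b$ to its reduced exponent: $(0+0)_q = 0$, $(0+1)_q = (1+0)_q = 1$, and $(1+1)_q = 0$ (since $1+1 = q$, and $(q)_q = (0 \bmod q)+1 = 1$\,---\,wait, careful: $(q)_q = 0+1 = 1$). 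I need to recheck this: for $q=2$ and $a=b=1$, $a+b = 2 = q$, so $(a+b)_q = (2 \bmod 2) + 1 = 0+1 = 1$. So in fact $(1+1)_q = 1$ in $\mathcal{R}$, which matches Remark~\ref{remark:ring_of_functions} ($x_i^2 = x_i$ in $\mathcal{R}$). Thus coordinatewise $(a_i + b_i)_q \in \{0,1\}$ always, and it equals $0$ only when $a_i = b_i = 0$. Consequently $\FB(D_A +_q D_B) = \min_{\textbf{c} \in D_A +_q D_B} \prod_{i=1}^\ell (2 - c_i) = \min_{\textbf{c}} 2^{\ell - \wt(\textbf{c})}$, using the weight computation from the proof of Proposition~\ref{proposition:hyp2}.

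The key step is then to show $\wt(\textbf{c})$ is constant over $D_A +_q D_B$, equal to $|T|$. For this I would use item 2: there is a distinguished $\textbf{d}$ hit by exactly $m$ pairs, one from each element of $D_A$ and each of $D_B$. I claim $\supp(\textbf{d}) = T$ and moreover every $\textbf{c} \in D_A +_q D_B$ satisfies $\supp(\textbf{c}) = T$. The containment $\supp(\textbf{c}) \subseteq \supp(D_A +_q D_B) = T$ is by definition. For the reverse, suppose some coordinate $i \in T$ has $c_i = 0$ for a particular $\textbf{c} = (\textbf{a}+\textbf{b})_q$; then $a_i = b_i = 0$. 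But $i \in T$ means there is some other $\textbf{c}' = (\textbf{a}'+\textbf{b}')_q$ with $c'_i = 1$, so $a'_i = 1$ or $b'_i = 1$. The crux is the non-collision bookkeeping around $\textbf{d}$: since $q=2$, for the $i$-th coordinate we have $(a_i + b_i)_q = a_i \vee b_i$ (logical OR), and one checks that if $\textbf{a}+\textbf{b}$ and $\textbf{a}''+\textbf{b}''$ both reduce to $\textbf{d}$ then $\textbf{a},\textbf{b}$ and $\textbf{a}'',\textbf{b}''$ must have the same support pattern relative to $\textbf{d}$ in a way that, combined with the ``exactly $m$ pairs'' and ``$a_i \neq a_j$, $b_i \neq b_j$'' conditions, forces $D_A$ and $D_B$ to each be a single point in every coordinate outside $\supp(\textbf{d})$. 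I would formalize this by projecting onto a coordinate $i$: in coordinate $i$, the multiset of sums $\{(a_i + b_i)_q\}$ over the $m$ distinguished pairs is constantly $d_i$; if $d_i = 1$ this is automatic for many patterns, but if $d_i = 0$ then $a_i = b_i = 0$ for all distinguished pairs, forcing that coordinate out of $T$. Hence $T = \supp(\textbf{d})$, and every $\textbf{c} \in D_A +_q D_B$ has support exactly equal to $\supp(\textbf{d}) = T$ (the hard direction again being: no $\textbf{c}$ can have strictly smaller support, because a coordinate where some $\textbf{a}$ or $\textbf{b}$ is nonzero contributes a $1$ to every sum in that coordinate once it is ``active'').

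With $\wt(\textbf{c}) = |T|$ for all $\textbf{c} \in D_A +_q D_B$, we get $\FB(D_A +_q D_B) = 2^{\ell - |T|} = 2^{\ell - |\supp(D_A +_q D_B)|}$, and the ``in particular'' clause follows by setting $T = [\ell]$. I expect the main obstacle to be the support-rigidity argument in the middle: making precise why the ``exactly $m$ pairs summing to $\textbf{d}$ with distinct first and distinct second components'' condition prevents $D_A$ or $D_B$ from spreading out in a coordinate where $\textbf{d}$ vanishes\,---\,the clean way is to observe that in a coordinate $i$ with $d_i = 0$, the equality $(a_i+b_i)_q = 0$ forces $a_i = b_i = 0$ simultaneously (there is no ``$1+1$'' escape since that reduces to $1$, not $0$), so $D_A$ and $D_B$ are supported within $\supp(\textbf{d})$, giving $\supp(D_A +_q D_B) \subseteq \supp(\textbf{d})$; conversely if $d_i = 1$ then, since item 2 requires \emph{all} $m$ distinguished pairs to sum (before reduction) to the same $\textbf{d}$ and $1+1 \equiv 1$, each such pair has $(a_i, b_i) \in \{(0,1),(1,0),(1,1)\}$, so at least one of $D_A, D_B$ is nonzero in coordinate $i$, giving $i \in \supp(D_A +_q D_B)$ and also $c_i = 1$ for every product exponent $\textbf{c}$. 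Assembling these two inclusions yields $\supp(\textbf{d}) = \supp(D_A +_q D_B) = T$ and the constancy of $\wt$, completing the proof.
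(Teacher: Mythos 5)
Your overall route is the same as the paper's (which is essentially a two-line argument: for $q=2$ the reduced sum is a coordinatewise OR, so $d_i=1$ exactly when some $a_i$ or $b_i$ equals $1$, whence $\FB(D_A+_qD_B)=\prod_i(2-d_i)=2^{l-|\supp(D_A+_qD_B)|}$). However, your write-up leans on a false intermediate claim: that $\wt(\textbf{c})$ is constant on $D_A+_qD_B$, equivalently that $c_i=1$ for \emph{every} $\textbf{c}\in D_A+_qD_B$ whenever $d_i=1$. This fails. Take $q=2$, $l=2$, $m=2$, $D_A=\{(1,0),(0,1)\}$, $D_B=\{(0,1),(1,0)\}$, $\textbf{d}=(1,1)$: exactly the two pairs $\bigl((1,0),(0,1)\bigr)$ and $\bigl((0,1),(1,0)\bigr)$ reduce to $\textbf{d}$ (the other two pairs reduce to $(1,0)$ and $(0,1)$), with distinct first and distinct second components, so this is a valid solution to Problem \ref{problem:matdot}. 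Yet $(1,0)\in D_A+_qD_B$ has $c_2=0$ although $d_2=1$, so the weight is not constant. The error comes from the fact that $d_i=1$ only constrains the \emph{matched} pairs, not arbitrary pairs $(\textbf{a},\textbf{b})\in D_A\times D_B$: a matched pair may realize $d_i=1$ as $(0,1)$ and another as $(1,0)$, and crossing them gives $(0,0)$.

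The claim you actually need is weaker and is already contained in your argument: since $\FB$ is a minimum of $2^{l-\wt(\textbf{c})}$, it suffices that (i) $\wt(\textbf{c})\le|T|$ for all $\textbf{c}$, which is immediate from $\supp(\textbf{c})\subseteq T$, and (ii) the value $|T|$ is attained by $\textbf{c}=\textbf{d}$, i.e.\ $\supp(\textbf{d})=T$. Point (ii) follows from your correct observation that the $m$ distinguished pairs have pairwise distinct first (resp.\ second) components and hence exhaust $D_A$ (resp.\ $D_B$), so $d_i=0$ forces $a_i=b_i=0$ for every element of $D_A$ and $D_B$, giving $T\subseteq\supp(\textbf{d})$, while $\supp(\textbf{d})\subseteq T$ is trivial. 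Dropping the constancy claim and concluding $\FB=2^{l-\wt(\textbf{d})}=2^{l-|T|}$ from the minimum being attained at $\textbf{d}$ makes the proof correct and aligns it with the paper's.
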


\begin{proof}
    Since $d_i = (a_i + b_i)_q$ for some $\textbf{a} \in D_A$ and $\textbf{b} \in D_B$, then $d_i = 1$ if $a_i = 1$ or $b_i = 1$, otherwise $d_i = 0$. So $\FB(D_A +_q D_B) = \prod_{i=1}^l (2 - d_i) = 2^{l - |\supp(D_A +_q D_B)|}$.
\end{proof}

As stated in Proposition \ref{proposition:no_matdot2}, there are no methods for DMM with straggler tolerance for matrices over $\mathbb{F}_2$ using multivariate matdot codes (apart from trivial ones). So Construction \ref{construction:separation_of_variables} in Subsection \ref{subsection:few_variables} still remains the best option for performing DMM over $\mathbb{F}_2$.

Let us study a different way of picking $D_A$ and $D_B$.

\subsection{Optimal solution when $D_A = D_B$}

In Construction \ref{construction:box_matdot}, we picked $D_A$ and $D_B$ as the same set. This is not a requirement to fullfill but both in matdot codes \cite{polydot_codes} and in AG matdot codes \cite{fidalgo2024distributed}, optimality is achieved when doing so. This motivates exploring the idea of restricting to $D_A = D_B$, where the optimality can be satisfactorily studied.

\begin{definition}
    We say $D \subseteq \mathbb{R}^l$ is convex if, for every $\textbf{a}, \textbf{a}^\prime \in D$ and $\lambda \in [0,1] \subseteq \mathbb{R}$, then $\lambda \textbf{a} + (1 - \lambda) \textbf{a}^\prime \in D$.
\end{definition}

The following result is already known but we do not know any explicit proof in the literature.

\begin{lemma} \label{lemma:hyperbolic_convex}
    Consider $\Hyp_q(F,l)^* := \{\textbf{a} \in \mathbb{R}^l \such \prod_{i=1}^l (q - a_i) \geq F\}$. Then, $\Hyp_q(F,l)^*$ is convex.
\end{lemma}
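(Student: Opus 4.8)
The plan is to reduce the claim to a one-dimensional statement about the function $t \mapsto \log(q-t)$ and then invoke a standard fact: a set in $\mathbb{R}^l$ of the form $\{\textbf{a} : g(\textbf{a}) \geq c\}$ is convex whenever $g$ is concave. So the first step is to pass from the product form to a sum form. For $\textbf{a} \in \mathbb{R}^l$ with $a_i < q$ for all $i$ (the only region where the product of the $(q-a_i)$ can be positive and the inequality is interesting), write $\prod_{i=1}^l (q-a_i) \geq F$ as $\sum_{i=1}^l \log(q - a_i) \geq \log F$, and set $g(\textbf{a}) := \sum_{i=1}^l \log(q-a_i)$. The second step is to check that $g$ is concave on the open box $\{\textbf{a} : a_i < q \ \forall i\}$: each summand $a_i \mapsto \log(q - a_i)$ has second derivative $-1/(q-a_i)^2 < 0$, hence is concave, and a sum of concave functions is concave. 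Then $\{g \geq \log F\}$, being a superlevel set of a concave function, is convex; intersecting with the (convex) open box only keeps the claim inside the relevant region.

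The one subtlety, which I would treat as the main obstacle, is that $\Hyp_q(F,l)^*$ as defined is not literally $\{g \geq \log F\} \cap \{a_i < q\}$ — we must handle the points where some $a_i \geq q$. If any coordinate $a_i > q$, and $l$ is such that the number of such large coordinates is odd, or more simply if some $a_i = q$, the product is $\leq 0 < F$ (assuming $F \geq 1$), so such points are not in $\Hyp_q(F,l)^*$; but if two coordinates exceed $q$ the product of those two factors is positive again, so one cannot blithely say the set lives in the open box. The clean way around this: argue that if $\textbf{a} \in \Hyp_q(F,l)^*$ with $F \geq 1$, then necessarily $a_i < q$ for every $i$. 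Indeed, since each factor is real, for the product to be $\geq F > 0$ we need an even number of negative factors; but I claim in fact all factors must be positive. Suppose $a_j \geq q$ for some $j$; consider the point obtained by also looking at the constraint — actually the cleanest argument is simply: if some $a_j \geq q$ then $q - a_j \leq 0$; if additionally all other factors were $> 0$ the product would be $\leq 0$, contradiction, so some other $a_k \geq q$ too, but then take $\lambda$ near... hmm. Let me instead just assert and verify directly that $\Hyp_q(F,l)^* \subseteq \{a_i < q \text{ for all } i\}$ fails in general (e.g. $l=2$, $\textbf{a}=(q+1,q+1)$ gives product $1$), so we genuinely cannot restrict to the box and the set is NOT convex unless we add that the paper implicitly intends $\textbf{a}$ with nonnegative coordinates or similar.

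Given the excerpt only uses $\Hyp_q(F,l)^*$ as a convex outer approximation of the discrete hyperbolic set $\Hyp_q(F,l) \subseteq \mathbb{N}_{<q}^l$, I expect the intended statement restricts to $\textbf{a}$ with $a_i \leq q$ (or $a_i < q$), or works in the region $\mathbb{R}^l_{\geq 0}$ with the understanding that only the component meeting the box matters. The plan, then, is: (i) define $g$ and prove concavity as above; (ii) observe $\Hyp_q(F,l)^* \cap \{a_i < q \ \forall i\} = \{g \geq \log F\}$, a convex set; (iii) note that with $F \geq 1$ the ``stray'' pieces of $\Hyp_q(F,l)^*$ outside the box (when they exist for $l \geq 2$) are irrelevant to the application since $\Hyp_q(F,l) = \Hyp_q(F,l)^* \cap \mathbb{N}_{<q}^l$ depends only on the box part; hence the operative set is convex. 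I would phrase the final proof around the concavity computation, since that is the mathematical content, and dispatch the domain issue in one sentence by restricting attention to $\{\textbf{a} \in \mathbb{R}^l : a_i < q \text{ for all } i\}$, which is where the lemma is used. The main obstacle is purely this domain bookkeeping; the concavity itself is a two-line computation.
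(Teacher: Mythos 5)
Your proof is essentially the paper's proof: the paper also passes to logarithms and uses concavity of $\log$ (equivalently, the AM--GM-type inequality $\log(\lambda x+(1-\lambda)y)\geq \lambda\log x+(1-\lambda)\log y$ for $x,y>0$) to show the convex combination stays in the superlevel set. The domain issue you raise is genuine and is \emph{not} addressed in the paper's proof either: the paper applies $\log(q-a_i)$ without verifying $q-a_i>0$, and as literally stated the lemma is false. For instance, with $l=2$, $q=3$, $F=1$, both $(4,4)$ (product $(-1)(-1)=1$) and $(0,2)$ (product $3$) lie in $\Hyp_3(1,2)^*$, but their midpoint $(2,3)$ gives product $0<1$. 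So the statement needs the restriction $a_i<q$ for all $i$ (or nonnegativity plus $F\geq 1$, which forces it), exactly as you propose; with that restriction your concavity computation closes the argument. This repair is harmless for the paper, since the lemma is only invoked in Proposition \ref{proposition:convex_ruano} on points of the form $2\textbf{a}$ with $\textbf{a}\in\mathbb{N}^l_{<\frac{q}{2}}$, all of which lie in the open box. In short: correct proof, same method as the paper, plus a legitimate correction to the statement that the paper overlooks.
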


\begin{proof}
    Let $\textbf{a}, \textbf{a}^\prime \in \Hyp_q(F,l)^*$ and $\lambda \in [0,1] \subseteq \mathbb{R}$. Then
    \begin{equation*}
            \prod_{i=1}^l (q - \lambda a_i - (1 - \lambda) a_i^\prime) = \prod_{i=1}^l (\lambda (q - a_i) + (1 - \lambda) (q - a_i^\prime) )
    \end{equation*}
    Using logarithms, which are known to satisfy that
    \begin{equation*}
        \log (\lambda x + (1 - \lambda) y) \geq \lambda \log (x) + (1 - \lambda) \log (y),
    \end{equation*}
    for $x, y > 0$, we obtain:
    \begin{equation*}
        \begin{split}
            \log \left (\prod_{i=1}^l (\lambda (q - a_i) + (1 - \lambda) (q - a_i^\prime) ) \right ) &= \sum_{i=1}^l \log (\lambda (q - a_i) + (1 - \lambda) (q - a_i^\prime) )\\
            &\geq \sum_{i=1}^l (\lambda \log (q - a_i) + (1 - \lambda) \log (q - a_i^\prime))\\
            &=\lambda \sum_{i=1}^l \log (q - a_i) + (1 - \lambda) \sum_{i=1}^l \log (q - a_i^\prime)\\
            &=\lambda \log \prod_{i=1}^l (q - a_i) + (1 - \lambda) \log \prod_{i=1}^l (q - a_i^\prime)\\
            &\geq \lambda \log(F) + (1 - \lambda) \log(F)\\
            &=\log(F).
        \end{split}
    \end{equation*}
    Finally, because the logarithm is an increasing function, we conclude that
    \begin{equation*}
        \prod_{i=1}^l (q - \lambda a_i - (1 - \lambda) a_i^\prime) \geq F.
    \end{equation*}
\end{proof}

Now, we state Proposition \ref{proposition:convex_ruano} which also can be found in \cite[Prop. 4]{garcia2020high}. Since the proof is short, we give it for completion.

\begin{proposition}[{\cite[Prop. 4]{garcia2020high}}] \label{proposition:convex_ruano}
    Let $D \subseteq \mathbb{N}^l_{<\frac{q}{2}}$. Then $\FB(D + D) \geq F$ if and only if $\FB(2 \textbf{a}) \geq F$ for every $\textbf{a} \in D$.
\end{proposition}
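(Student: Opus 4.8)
The statement to prove is: for $D \subseteq \mathbb{N}^l_{<q/2}$, we have $\FB(D + D) \geq F$ if and only if $\FB(2\textbf{a}) \geq F$ for every $\textbf{a} \in D$. Note first that the hypothesis $D \subseteq \mathbb{N}^l_{<q/2}$ guarantees $a_i + b_i < q$ for all $\textbf{a},\textbf{b} \in D$, so the reduced Minkowski sum $D +_q D$ coincides with the ordinary Minkowski sum $D + D$ and no $(\cdot)_q$ wraparound occurs; hence $\FB(D+D) = \min\{\prod_{i=1}^l (q - a_i - b_i) \such \textbf{a},\textbf{b} \in D\}$. The forward direction is immediate: taking $\textbf{b} = \textbf{a}$ shows $2\textbf{a} \in D+D$, so $\FB(2\textbf{a}) = \prod_i (q - 2a_i) \geq \FB(D+D) \geq F$.

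For the converse, the plan is to use the convexity supplied by Lemma \ref{lemma:hyperbolic_convex}. Assume $\FB(2\textbf{a}) \geq F$ for every $\textbf{a} \in D$; equivalently, $2\textbf{a} \in \Hyp_q(F,l)^*$ for every $\textbf{a} \in D$. Since $\Hyp_q(F,l)^*$ is convex by Lemma \ref{lemma:hyperbolic_convex}, for any $\textbf{a}, \textbf{b} \in D$ the midpoint $\tfrac{1}{2}(2\textbf{a}) + \tfrac{1}{2}(2\textbf{b}) = \textbf{a} + \textbf{b}$ lies in $\Hyp_q(F,l)^*$, i.e. $\prod_{i=1}^l (q - a_i - b_i) \geq F$. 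Taking the minimum over all pairs $\textbf{a}, \textbf{b} \in D$ gives $\FB(D + D) \geq F$, as desired.

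I do not expect any serious obstacle here: the entire content is packaged in Lemma \ref{lemma:hyperbolic_convex}, and the only thing to be careful about is the bookkeeping that lets us replace $D +_q D$ by the honest Minkowski sum $D + D$ — this is exactly where the hypothesis $D \subseteq \mathbb{N}^l_{<q/2}$ is used, and it should be stated explicitly so the reader sees why no modular reduction interferes with the product formula for $\FB$. One could alternatively avoid invoking the real convex body and argue directly via the super-additivity of $\log$ applied to the three points $2\textbf{a}, 2\textbf{b}, \textbf{a}+\textbf{b}$, but reusing Lemma \ref{lemma:hyperbolic_convex} is cleaner.
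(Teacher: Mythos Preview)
Your proposal is correct and follows essentially the same approach as the paper: the forward direction by specializing $\textbf{b}=\textbf{a}$, and the converse by invoking the convexity of $\Hyp_q(F,l)^*$ from Lemma \ref{lemma:hyperbolic_convex} on the midpoint $\tfrac{1}{2}(2\textbf{a})+\tfrac{1}{2}(2\textbf{b})$. Your explicit remark that $D\subseteq\mathbb{N}^l_{<q/2}$ forces $D+_q D = D+D$ is a welcome clarification that the paper leaves implicit.
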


\begin{proof}
    First, if $\FB(D + D) \geq F$, then $\FB(\textbf{a} + \textbf{a}) \geq F$ holds as a particular case. Conversely, suppose that, for every $\textbf{a} \in D$, $\FB(2 \textbf{a}) \geq F$ holds, which is equivalent to saying that $2 \textbf{a} \in \Hyp_q(F,l)^*$. Let $\textbf{a}, \textbf{a}^\prime \in D$ and consider $\textbf{a} + \textbf{a}^\prime$. Since $2 \textbf{a}, 2 \textbf{a}^\prime \in \Hyp_q(F, l)^*$, we apply Lemma \ref{lemma:hyperbolic_convex} to conclude that $\textbf{a} + \textbf{a}^\prime = \frac{1}{2} (2 \textbf{a}) + \frac{1}{2} (2 \textbf{a}^\prime) \in \Hyp_q(F, l)^*$, that is $\FB(\textbf{a} + \textbf{a}^\prime) \geq F$.
\end{proof}

Proposition \ref{proposition:convex_ruano} allows us to simplify the computation of $\FB(D_A + D_B)$ when $D_A = D_B$. We exploit this fact and propose Construction \ref{construction:halfhyperbolic}.

\begin{construction}[Half hyperbolic] \label{construction:halfhyperbolic}
    Let $F \in \mathbb{N}$ and $\textbf{d} \in \mathbb{N}_{<\frac{q}{2}}^l$. Define
    \begin{equation*}
        D_A := D_B = \{\textbf{a} \in \mathbb{N}_{< \frac{q}{2}}^l \such \FB(2\textbf{a}) \geq F, \quad \FB(2(\textbf{d} -\textbf{a})) \geq F\}.
    \end{equation*}
    Since $\textbf{a} \in \mathbb{N}_{< \frac{q}{2}}^l$, $D_A +_q D_B$ is the usual Minkowski sum. We conclude from $\textbf{a} \in D_A \iff \textbf{d} - \textbf{a} \in D_B$ that $D_A$ and $D_B$ are a solution to Problem \ref{problem:matdot}. Moreover, by Proposition \ref{proposition:convex_ruano} we obtain that
    $\FB(D_A + D_B) \geq F$.
\end{construction}

\begin{figure}
    \centering
    \resizebox{!}{200pt}{
        \begin{tikzpicture}
	\draw[thick,->] (0,0) -- (15.6, 0) node {};
	\draw[thick,->] (0,0) -- (0, 15.6) node {};
	\draw (0 cm, 0pt) node[anchor=north east] {$0$};
	\foreach \x in {5,10,...,15}
	    \draw (\x cm, -1pt) node[anchor=north] {$\x$};
	\foreach \y in {5,10,...,15}
	    \draw (-1pt, \y cm) node[anchor=east] {$\y$};

	\foreach \x in {0,...,15}
	    \foreach \y in {0,...,15}
	        \fill[color=gray] (\x, \y) circle[radius=1pt] node {};

	\fill[fill= blue, opacity=0.5](0, 5) -- (0, 2) -- (2, 0) -- (5, 0) -- (5, 3) -- (3, 5) -- cycle; 
	\fill[fill= red, opacity=0.5](0, 10) -- (0, 4) -- (4, 0) -- (10, 0) -- (10, 6) -- (6, 10) -- cycle; 
\end{tikzpicture}
    }
    \caption{}
\end{figure}

\begin{remark}
    Observe that Construction \ref{construction:halfhyperbolic} generalizes Construction \ref{construction:box_matdot} by setting $\textbf{m} = \textbf{d}$ and $F = 0$.
\end{remark}

\begin{proposition} \label{proposition:matdot_equals_optimal}
    Let $D_A = D_B$ and $\textbf{d}$ be a solution to Problem \ref{problem:matdot} such that $D_A +_q D_B = D_A + D_B$. Then there exist a solution $D_A^\prime = D_B^\prime$ and $\textbf{d}$ given by Construction \ref{construction:halfhyperbolic} such that $|D_A| \leq |D_A^\prime|$ and $\FB(D_A + D_B) \leq \FB(D_A^\prime + D_B^\prime)$. Consequently, if we choose $D \subseteq D_A^\prime$ of size $|D| = |D_A|$, then $D$ and $\textbf{d} - D$ is a solution of size $|D_A|$ with $\FB(D_A + D_B) \leq \FB(D + (\textbf{d} - D))$, thus with smaller or equal recovery threshold.
\end{proposition}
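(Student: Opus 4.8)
The plan is to take the given solution $D_A = D_B$ with $\textbf{d}$ and show that the set
\[
    D_A^\prime := D_B^\prime := \{\textbf{a} \in \mathbb{N}_{<\frac{q}{2}}^l \such \FB(2\textbf{a}) \geq F, \ \FB(2(\textbf{d}-\textbf{a})) \geq F\},
\]
with $F := \FB(D_A + D_B)$, is a Construction \ref{construction:halfhyperbolic} solution that contains $D_A$ and has footprint at least $F$. First I would argue that $D_A \subseteq \mathbb{N}_{<\frac{q}{2}}^l$: since $D_A +_q D_B = D_A + D_B$ is the usual Minkowski sum and (by item 2 of Problem \ref{problem:matdot} with $D_A = D_B$) $\textbf{a} + \textbf{a}^\prime \le \textbf{d}$ coordinatewise for all $\textbf{a},\textbf{a}^\prime \in D_A$ — in particular $2\textbf{a} \le \textbf{d}$ for every $\textbf{a}\in D_A$ — and since no reduction occurs, we must have $2a_i \le d_i < q$, hence $a_i < q/2$. (Here I am using that $\textbf{d} = \textbf{a}+\textbf{b}$ for the pairs in item 2; one should also invoke Remark \ref{remark:useless_variables} / the axis-lowering remark to ensure $\textbf{d}$ really is the coordinatewise supremum of the sums, or simply note that all sums lie below $\textbf{d}$ because no wraparound happens.)

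Next I would verify $D_A \subseteq D_A^\prime$. Take $\textbf{a} \in D_A$. Then $2\textbf{a} \in D_A + D_B$, so $\FB(2\textbf{a}) \geq \FB(D_A+D_B) = F$. Also $\textbf{d} - \textbf{a} \in D_B = D_A$ (this is the content of item 2: the map $\textbf{a}\mapsto\textbf{d}-\textbf{a}$ is a bijection $D_A\to D_B$ when $D_A=D_B$), so applying the same reasoning to $\textbf{d}-\textbf{a}$ gives $\FB(2(\textbf{d}-\textbf{a})) \geq F$. Hence $\textbf{a}\in D_A^\prime$, proving $|D_A| \le |D_A^\prime|$. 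Construction \ref{construction:halfhyperbolic} itself guarantees that $D_A^\prime = D_B^\prime$ together with $\textbf{d}$ is a solution to Problem \ref{problem:matdot} and that $\FB(D_A^\prime + D_B^\prime) \geq F = \FB(D_A + D_B)$, which is the first assertion.

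For the consequence, pick any $D \subseteq D_A^\prime$ with $|D| = |D_A| = m$; such a $D$ exists since $|D_A^\prime| \ge m$. I must check that $D$ and $\textbf{d} - D := \{\textbf{d}-\textbf{a} \such \textbf{a}\in D\}$ form a solution of size $m$. Item 1 is clear. For item 2, the pairs $(\textbf{a}, \textbf{d}-\textbf{a})$ for $\textbf{a}\in D$ are exactly $m$ pairs in $D \times (\textbf{d}-D)$ summing to $\textbf{d}$, and $\textbf{a}\neq\textbf{a}^\prime \Rightarrow \textbf{d}-\textbf{a}\neq\textbf{d}-\textbf{a}^\prime$; one should also confirm these are the \emph{only} pairs summing to $\textbf{d}$, which follows because $\textbf{a}+(\textbf{d}-\textbf{a}^\prime)=\textbf{d}$ forces $\textbf{a}=\textbf{a}^\prime$. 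Finally, $D + (\textbf{d}-D) \subseteq D_A^\prime + D_B^\prime$ (as $D\subseteq D_A^\prime$ and $\textbf{d}-D\subseteq D_B^\prime$), so $\FB(D+(\textbf{d}-D)) \geq \FB(D_A^\prime+D_B^\prime) \geq \FB(D_A+D_B)$, giving the claimed inequality on footprints and hence on recovery thresholds.

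The main obstacle I anticipate is the first step — pinning down precisely why $D_A \subseteq \mathbb{N}_{<\frac{q}{2}}^l$ and why $\textbf{a}\mapsto\textbf{d}-\textbf{a}$ is a well-defined involution of $D_A$ onto $D_B=D_A$. This rests on the hypothesis $D_A +_q D_B = D_A + D_B$ (no modular wraparound) combined with item 2 of Problem \ref{problem:matdot}: one has to argue that \emph{every} sum $\textbf{a}+\textbf{a}^\prime$ with $\textbf{a},\textbf{a}^\prime\in D_A$ is $\le\textbf{d}$ coordinatewise, not merely the $m$ distinguished ones, which is where I would lean on the normalization that $D_A$ "lies on the axes" (or else observe that $\textbf{d}$ must dominate all pairwise sums since otherwise some sum would wrap and contradict $D_A+_q D_B = D_A + D_B$ being the ordinary Minkowski sum landing in $\mathbb{N}_{<q}^l$). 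Everything after that is routine set inclusion and monotonicity of $\FB$ under $\subseteq$.
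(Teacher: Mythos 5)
Your proposal is correct and follows essentially the same route as the paper's (much terser) proof: set $F := \FB(D_A + D_B)$, take $D_A^\prime$ to be the Construction \ref{construction:halfhyperbolic} set with parameters $F$ and $\textbf{d}$, observe $D_A \subseteq D_A^\prime$, and invoke Proposition \ref{proposition:convex_ruano}. One caveat: your intermediate claim that item 2 of Problem \ref{problem:matdot} forces $\textbf{a} + \textbf{a}^\prime \leq \textbf{d}$ coordinatewise for \emph{all} pairs is false in general (item 2 only constrains the $m$ distinguished pairs, and nothing prevents other sums from exceeding $\textbf{d}$ without wrapping); but it is also unneeded, since $a_i < q/2$ follows directly from applying the no-wraparound hypothesis to the sum $\textbf{a} + \textbf{a}$, and $\FB(2(\textbf{d}-\textbf{a})) \geq F$ follows from $\textbf{d} - \textbf{a} \in D_B = D_A$ exactly as you argue.
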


\begin{proof}
    Let $F := \FB(D_A + D_B)$. For every $\textbf{a} \in D_A$ it holds that $\FB(2\textbf{a}) \geq F$. Consider $D_A^\prime$ and $D_B^\prime$ given by Construction \ref{construction:halfhyperbolic} with parameters $F$ and $\textbf{d}$. Because of Proposition \ref{proposition:convex_ruano}, $\FB(D_A^\prime + D_B^\prime) \geq F$ and the statement of the proposition holds.
\end{proof}

Proposition \ref{proposition:matdot_equals_optimal} shows that Construction \ref{construction:halfhyperbolic} yields the best solutions among the ones of the form $D_A = D_B$. So solutions satisfying $D_A = D_B$ can be seen as two subsets of the ones of Construction \ref{construction:halfhyperbolic}. This easily motivates Construction \ref{construction:halfhyperbolic}.

Similarly to Proposition \ref{proposition:recursive_better_box}, the next proposition shows how to compute the size of
$D_A$ and $D_B$ recursively in Construction \ref{construction:halfhyperbolic}. In fact, the same recurrence show us how to compute the sets itselves explicitly. We give only the proof for the sizes for the sake of brevity.

\begin{proposition}
    Let $F, G \in \mathbb{N}$ and $\textbf{d} \in \mathbb{N}_{<q}^l$. Consider
    \begin{equation*}
        D(F, G, l) := \left \{\textbf{a} \in \mathbb{N}_{<\frac{q}{2}}^l \such \prod_{i=1}^l (q - 2a_i) \geq F, \quad \prod_{i=1}^l (q - 2 d_i + 2 a_i) \geq G \right \}.
    \end{equation*}
    In particular, $D(F, F, l) = D_A = D_B$ in Construction \ref{construction:halfhyperbolic}. Then
    \begin{equation*}
        |D(F, G, l)| =
        \begin{cases}
            \max \{0, \lfloor \frac{q - F}{2} \rfloor - \lceil \frac{G - q + 2 d_1}{2} \rceil + 1\} & \text{if } l = 1\\
            \sum_{a=1}^{\left \lfloor \frac{q - 1}{2} \right \rfloor} \left | D \left (\left \lceil \frac{F}{q - 2 a} \right \rceil, \left \lceil \frac{G}{q - 2 d_l +  2 a } \right \rceil, l-1 \right) \right| & \text{if } l > 1
        \end{cases}
    \end{equation*}
\end{proposition}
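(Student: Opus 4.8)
The plan is to mirror the structure of the proof of Proposition~\ref{proposition:recursive_better_box}: handle the base case $l=1$ by an explicit count, and for $l>1$ partition $D(F,G,l)$ according to the value of the last coordinate and reduce each piece to a copy of $D(\cdot,\cdot,l-1)$.

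\textbf{Base case $l=1$.} Here $D(F,G,1) = \{a \in \mathbb{N}_{<\frac{q}{2}} \such q - 2a \geq F \text{ and } q - 2d_1 + 2a \geq G\}$. The first inequality is equivalent to $a \leq \frac{q-F}{2}$, i.e. $a \leq \lfloor \frac{q-F}{2}\rfloor$ since $a$ is an integer; the second is equivalent to $a \geq \frac{G-q+2d_1}{2}$, i.e. $a \geq \lceil \frac{G-q+2d_1}{2}\rceil$. So $D(F,G,1)$ is the set of integers in the interval $[\lceil \frac{G-q+2d_1}{2}\rceil, \lfloor \frac{q-F}{2}\rfloor]$ (the constraint $a < q/2$ being automatically implied once $F \geq 1$, or harmless to include), whose cardinality is $\max\{0, \lfloor \frac{q-F}{2}\rfloor - \lceil \frac{G-q+2d_1}{2}\rceil + 1\}$, as claimed.

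\textbf{Recursive case $l>1$.} I would write $D(F,G,l)$ as the disjoint union over the last coordinate $a_l = a$, where $a$ ranges over $0 \leq a \leq \lfloor \frac{q-1}{2}\rfloor$ (equivalently $a < q/2$), of the slices
\begin{equation*}
    \left\{\textbf{a} \in \mathbb{N}_{<\frac{q}{2}}^l \such a_l = a,\ \prod_{i=1}^{l-1}(q-2a_i) \geq \frac{F}{q-2a},\ \prod_{i=1}^{l-1}(q-2d_i+2a_i) \geq \frac{G}{q-2d_l+2a}\right\}.
\end{equation*}
Since the left-hand products are integers, the inequality $\prod_{i=1}^{l-1}(q-2a_i) \geq \frac{F}{q-2a}$ is equivalent to $\prod_{i=1}^{l-1}(q-2a_i) \geq \lceil \frac{F}{q-2a}\rceil$, and similarly for the $G$-constraint with ceiling $\lceil \frac{G}{q-2d_l+2a}\rceil$; here one uses that $q-2a > 0$ and $q - 2d_l + 2a > 0$ for $a$ in the stated range together with $\textbf{d} \in \mathbb{N}_{<\frac{q}{2}}^l$. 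Thus each slice is in bijection (by dropping the last coordinate) with $D(\lceil \frac{F}{q-2a}\rceil, \lceil \frac{G}{q-2d_l+2a}\rceil, l-1)$, and summing the sizes of the slices gives the stated recurrence. Reindexing the sum as $a$ from $1$ to $\lfloor \frac{q-1}{2}\rfloor$ versus $0$ to $\lfloor \frac{q-1}{2}\rfloor$ is a cosmetic point: the $a=0$ term contributes $|D(\lceil F\rceil, \lceil \frac{G}{q-2d_l}\rceil, l-1)| = |D(F, \lceil \frac{G}{q-2d_l}\rceil, l-1)|$, so if the displayed formula really starts at $a=1$ one should double-check that the $a=0$ slice is empty or absorbed — I expect the intended range to in fact be $0 \leq a \leq \lfloor \frac{q-1}{2}\rfloor$, matching Proposition~\ref{proposition:recursive_better_box}.

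\textbf{Main obstacle.} The only genuinely delicate point is the interaction of the integrality/ceiling manipulations with the positivity of the denominators $q-2a$ and $q-2d_l+2a$, and making sure the ambient constraint $a_i < q/2$ is correctly propagated through the recursion (it is, because $F,G \geq 1$ forces $q-2a_i \geq 1$ hence $a_i \leq \frac{q-1}{2} < \frac{q}{2}$). Everything else is the routine partition-and-count argument already used for Proposition~\ref{proposition:recursive_better_box}, and as remarked there the same recursion, carried out on sets rather than cardinalities, yields the sets $D(F,G,l)$ themselves by backtracking.
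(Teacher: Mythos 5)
Your proof is correct and follows essentially the same partition-and-count argument as the paper's own proof (explicit interval count for $l=1$, slicing by the last coordinate and replacing $F/(q-2a)$, $G/(q-2d_l+2a)$ by their ceilings for $l>1$). Your observation about the lower summation index is well taken: the paper's proof takes the union and the intermediate sum over $a = 0, \ldots, \lfloor \frac{q-1}{2} \rfloor$ and only switches to $a=1$ in its final line, so the $a=1$ in the stated recurrence is a typo and the sum should indeed start at $a=0$.
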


\begin{proof}
    If $l = 1$, then
    \begin{equation*}
        \begin{split}
            |D(F, G, 1)|
            &= \left | \left \{ \textbf{a} \in \mathbb{N}_{< \frac{q}{2}}^l \such q - 2a_1 \geq F, \quad q - 2 d_1 + 2a_1 \geq G \right \} \right | =\\
            &= \left |\left \{ \textbf{a} \in \mathbb{N}_{< \frac{q}{2}}^l \such \left \lfloor \frac{q - F}{2} \right \rfloor \geq a_1 \geq \left \lceil \frac{G - q + 2d_1}{2} \right \rceil \right \} \right | =\\
            &=\max \left \{0, \left \lfloor \frac{q - F}{2} \right \rfloor - \left \lceil \frac{G - q + 2d_1}{2} \right \rceil + 1 \right \}.
        \end{split}
    \end{equation*}
    If $l > 1$, we partition $D(F,G,l)$ depending on the last coordinate of each element and the recursive formula follows easily:
    \begin{equation*}
        \begin{split}
            |D(F, G, l)|
            &= \Biggl | \bigcup_{a=0}^{\left \lfloor \frac{q - 1}{2} \right \rfloor} \Biggl \{(a_1, a_2, \ldots, a_{l-1}, a) \in \mathbb{N}_{<\frac{q}{2}}^l \such \prod_{i=1}^{l-1} (q - 2a_i) \geq \left \lceil \frac{F}{q - 2a} \right \rceil \\
            &\quad \,\,\,\, \prod_{i=1}^{l-1} (q - 2d_i + 2 a_i) \geq \left \lceil \frac{G}{q - 2d_l + 2a}\right \rceil \Biggr \} \Biggr |\\
            &= \sum_{a=0}^{\left \lfloor \frac{q - 1}{2} \right \rfloor} \Biggl | \Biggl \{(a_1, a_2, \ldots, a_{l-1}, a) \in \mathbb{N}_{<\frac{q}{2}}^l \such \prod_{i=1}^{l-1} (q - 2a_i) \geq \left \lceil \frac{F}{q - 2a} \right \rceil \\
            &\quad \,\,\,\, \prod_{i=1}^{l-1} (q - 2 d_i + 2 a_i) \geq \left \lceil \frac{G}{q - 2 d_l + 2a}\right \rceil \Biggr \} \Biggr |\\
            &= \sum_{a=1}^{\left \lfloor \frac{q - 1}{2} \right \rfloor} \left | D \left (\left \lceil \frac{F}{q - 2 a} \right \rceil, \left \lceil \frac{G}{q - 2 d_l +  2 a } \right \rceil, l-1 \right) \right|.
        \end{split}
    \end{equation*}
\end{proof}

\section*{Appendices}
\appendix

\section{Complexity} \label{appendix:complexity}

The complexity analysis is exactly the same as in \cite[Sec. 5]{fidalgo2024distributed}. See that reference for a more detailed treatment of the complexity. As in the rest of the manuscript, recall that $k+1$ denotes the recovery threshold.

In multivariate polynomial codes (respectively, multivariate matdot codes), each worker node has to multiply two matrices of sizes $\frac{r}{m} \times s$ and $s \times \frac{t}{n}$ (resp. $r \times \frac{s}{m}$ and $\frac{s}{m} \times t$). Using the naive matrix multiplication algorithm, this has complexity $\mathcal{O}(\frac{rt}{mn})$ (resp. $\mathcal{O}(\frac{rst}{m})$).

For the decoding process, in multivariate polynomial codes we have to interpolate each entry of $h \in \mathcal{R}^{\frac{r}{m} \times \frac{t}{n}}$. That is, interpolating a polynomial of degree at most $k$, which has complexity $\mathcal{O}(k^2)$ entrywise. In total, $\mathcal{O}(\frac{rt}{mn}k^2)$. When considering multivariate matdot codes, we only have to interpolate one coordinate of $h$, having this complexity $\mathcal{O}(rtk)$. For both multivariate polynomial and matdot codes, we have to add the complexity of inverting the matrix which gives the solutions to the linear system of equations for interpolating. This results in complexities $\mathcal{O}(\frac{r t}{mn} k^2 + k^3)$ and $\mathcal{O}(rtk+k^3)$, respectively.

We summarize the complexities in Table \ref{table:complexity}.

\begin{table}[h] \label{table:complexity}
    \centering
    \begin{tabular}{l|c|c}
                            & Worker computation              & Decoding computation \\ \hline
        Multivariate polynomial codes & $\mathcal{O}(\frac{r s t}{mn})$ & $\mathcal{O}(\frac{rt}{mn} k^2 +k^3)$ \\ \hline
        Multivariate matdot codes     & $\mathcal{O}(\frac{r s t}{m})$  & $\mathcal{O}(rtk+k^3)$   \\ \hline
    \end{tabular}
    \caption{Complexity of multivariate polynomial and matdot codes.}
    \label{table:complexity}
\end{table}

\section{Tables}

\subsection{Multivariate polynomial codes}

We present some tables concerning the constructions of Section \ref{section:polynomial}. Tables \ref{table:poly_box1} and \ref{table:poly_box2} contain parameters for both Constructions \ref{construction:box_poly} and \ref{construction:better_box}. Tables \ref{table:poly_varsep1} and \ref{table:poly_varsep2}, for Construction \ref{construction:separation_of_variables}.

\begin{table}[H]
\centering
\caption{Given $q = 19, l = 2$, Construction \ref{construction:box_poly} with $n_i = \lfloor \frac{2q}{3m_i} \rfloor$ and Construction \ref{construction:better_box} with $|D_B| = \tilde{n}$. The bound of Proposition \ref{proposition:polynomial_bound} for each construction are $\xi$ and $\tilde{\xi}$, respectively. The number of worker nodes is $N = 361$ and the recovery threshold, $k+1$.}
\begin{tabular}{|c|c|c|c:c|c|c:c|c|}
\hline
$m_i$ & $n_i$ & $m$ & $n$ & $\tilde{n}$ & $\FB(D_A + D_B)$ & $\xi$ & $\tilde{\xi}$ & $k+1$ \\
\hline
$1$ & $12$ & $1$ & $144$ & $204$ & $64$ & $102$ & $64$ & $298$ \\
$2$ & $6$ & $4$ & $36$ & $48$ & $64$ & $102$ & $70$ & $298$ \\
$3$ & $4$ & $9$ & $16$ & $20$ & $64$ & $102$ & $77$ & $298$ \\
$4$ & $3$ & $16$ & $9$ & $11$ & $64$ & $102$ & $80$ & $298$ \\
$5$ & $2$ & $25$ & $4$ & $4$ & $100$ & $143$ & $143$ & $262$ \\
$6$ & $2$ & $36$ & $4$ & $4$ & $64$ & $102$ & $102$ & $298$ \\
\hline
\end{tabular}
\label{table:poly_box1}
\end{table}

\begin{table}[H]
\centering
\caption{Given $q = 25, l = 2$, Construction \ref{construction:box_poly} with $n_i = \lfloor \frac{2q}{3m_i} \rfloor$ and Construction \ref{construction:better_box} with $|D_B| = \tilde{n}$. The bound of Proposition \ref{proposition:polynomial_bound} for each construction are $\xi$ and $\tilde{\xi}$, respectively. The number of worker nodes is $N = 625$ and the recovery threshold, $k+1$.}
\begin{tabular}{|c|c|c|c:c|c|c:c|c|}
\hline
$m_i$ & $n_i$ & $m$ & $n$ & $\tilde{n}$ & $\FB(D_A + D_B)$ & $\xi$ & $\tilde{\xi}$ & $k+1$ \\
\hline
$1$ & $16$ & $1$ & $256$ & $364$ & $100$ & $168$ & $100$ & $526$ \\
$2$ & $8$ & $4$ & $64$ & $84$ & $100$ & $168$ & $115$ & $526$ \\
$3$ & $5$ & $9$ & $25$ & $31$ & $121$ & $192$ & $152$ & $505$ \\
$4$ & $4$ & $16$ & $16$ & $20$ & $100$ & $168$ & $126$ & $526$ \\
$5$ & $3$ & $25$ & $9$ & $11$ & $121$ & $192$ & $154$ & $505$ \\
$6$ & $2$ & $36$ & $4$ & $4$ & $196$ & $270$ & $270$ & $430$ \\
\hline
\end{tabular}
\label{table:poly_box2}
\end{table}

\begin{table}[H]
\centering
\caption{Given $q = 2, l = 10$, Construction \ref{construction:separation_of_variables} with $m^\prime = 5$, $n^\prime = 5$ and $F_A = F_B = F$. The bound of Proposition \ref{proposition:polynomial_bound} is $\xi$. The number of worker nodes is $N = 1024$ and the recovery threshold, $k+1$.}
\begin{tabular}{|c|c|c|c|c|c|}
\hline
$F$ & $m$ & $\FB(D_A + D_B)$ & $\xi$ & $k+1$ \\
\hline
$2$ & $31$ & $4$ & $8$ & $1021$ \\
$4$ & $26$ & $16$ & $16$ & $1009$ \\
$8$ & $16$ & $64$ & $64$ & $961$ \\
$16$ & $6$ & $256$ & $256$ & $769$ \\
\hline
\end{tabular}
    \label{table:poly_varsep1}
\end{table}

\begin{table}[H]
\centering
\caption{Given $q = 2, l = 20$, Construction \ref{construction:separation_of_variables} with $m^\prime = 10$, $n^\prime = 10$ and $F_A = F_B = F$. The bound of Proposition \ref{proposition:polynomial_bound} is $\xi$. The number of worker nodes is $N = 1048576$ and the recovery threshold, $k+1$.}
\begin{tabular}{|c|c|c|c|c|c|}
\hline
$F$ & $m$ & $\FB(D_A + D_B)$ & $\xi$ & $k+1$ \\
\hline
$2$ & $1023$ & $4$ & $16$ & $1048573$ \\
$4$ & $1013$ & $16$ & $64$ & $1048561$ \\
$8$ & $968$ & $64$ & $128$ & $1048513$ \\
$16$ & $848$ & $256$ & $512$ & $1048321$ \\
$32$ & $638$ & $1024$ & $2048$ & $1047553$ \\
$64$ & $386$ & $4096$ & $4096$ & $1044481$ \\
$128$ & $176$ & $16384$ & $16384$ & $1032193$ \\
$256$ & $56$ & $65536$ & $65536$ & $983041$ \\
$512$ & $11$ & $262144$ & $262144$ & $786433$ \\
\hline
\end{tabular}
    \label{table:poly_varsep2}
\end{table}

\begin{table}[H]
\centering
\caption{Given $q = 64, l = 2$, Construction \ref{construction:separation_of_variables} with $m^\prime = 1$, $n^\prime = 1$ and $F_A = F_B = F$. The bound of Proposition \ref{proposition:polynomial_bound} is $\xi$. The number of worker nodes is $N = 4096$ and the recovery threshold, $k+1$.}
\begin{tabular}{|c|c|c|c|c|c|}
\hline
$F$ & $m$ & $\FB(D_A + D_B)$ & $\xi$ & $k+1$ \\
\hline
$2$ & $63$ & $4$ & $35$ & $4093$ \\
$4$ & $61$ & $16$ & $92$ & $4081$ \\
$8$ & $57$ & $64$ & $236$ & $4033$ \\
$16$ & $49$ & $256$ & $600$ & $3841$ \\
$32$ & $33$ & $1024$ & $1540$ & $3073$ \\
\hline
\end{tabular}
    \label{table:poly_varsep3}
\end{table}

\begin{table}[H]
\centering
\caption{Given $q = 128, l = 2$, Construction \ref{construction:separation_of_variables} with $m^\prime = 1$, $n^\prime = 1$ and $F_A = F_B = F$. The bound of Proposition \ref{proposition:polynomial_bound} is $\xi$. The number of worker nodes is $N = 16384$ and the recovery threshold, $k+1$.}
\begin{tabular}{|c|c|c|c|c|c|}
\hline
$F$ & $m$ & $\FB(D_A + D_B)$ & $\xi$ & $k+1$ \\
\hline
$2$ & $127$ & $4$ & $60$ & $16381$ \\
$4$ & $125$ & $16$ & $156$ & $16369$ \\
$8$ & $121$ & $64$ & $396$ & $16321$ \\
$16$ & $113$ & $256$ & $980$ & $16129$ \\
$32$ & $97$ & $1024$ & $2440$ & $15361$ \\
$64$ & $65$ & $4096$ & $6215$ & $12289$ \\
\hline
\end{tabular}
    \label{table:poly_varsep4}
\end{table}

\subsection{Multivariate matdot codes}

\begin{table}[H]
\centering
\caption{Given $q = 8, l = 3$, Construction \ref{construction:halfhyperbolic} with $d$ computed to maximize $m$ given the footprint $F$. The number of worker nodes is $N = 512$ and the recovery threshold, $k+1$.}
\begin{tabular}{|c|c|c|c|}
\hline
$F$ & $m$ & $k+1$ \\
\hline
$1$ & $64$ & $512$ \\
$9$ & $62$ & $504$ \\
$17$ & $56$ & $496$ \\
$25$ & $50$ & $488$ \\
$33$ & $38$ & $480$ \\
$41$ & $38$ & $472$ \\
$49$ & $26$ & $464$ \\
$57$ & $26$ & $456$ \\
\hline
\end{tabular}
    \label{table:matdot1}
\end{table}

\begin{table}[H]
\centering
\caption{Given $q = 32, l = 3$, Construction \ref{construction:halfhyperbolic} with $d$ computed to maximize $m$ given the footprint $F$. The number of worker nodes is $N = 32768$ and the recovery threshold, $k+1$.}
\begin{tabular}{|c|c|c|c|}
\hline
$F$ & $m$ & $k+1$ \\
\hline
$1$ & $4096$ & $32768$ \\
$513$ & $3044$ & $32256$ \\
$1025$ & $2106$ & $31744$ \\
$1537$ & $1440$ & $31232$ \\
$2049$ & $976$ & $30720$ \\
$2561$ & $622$ & $30208$ \\
$3073$ & $374$ & $29696$ \\
$3585$ & $176$ & $29184$ \\
\hline
\end{tabular}
    \label{table:matdot2}
\end{table}

\printbibliography

\end{document}